\documentclass[11pt]{article}

\usepackage{template}
\usepackage{natbib}
\usepackage{accents}
\usepackage{algcompatible}
\usepackage{multirow}
\usepackage{setspace}

\title{Spatially Varying Gene Regulatory Networks via Bayesian Nonparametric Covariate-Dependent Directed Cyclic Graphical Models}

\makeatletter
\renewcommand\@fnsymbol[1]{}
\makeatother

\author{
  Trisha Dawn$^{*}$,
  Yang Ni$^{\dagger}$\thanks{Email: trisha@stat.tamu.edu,  yang.ni@austin.utexas.edu} \\
  {\normalsize $^{*}$ Department of Statistics, Texas A\&M University} \\
  {\normalsize $^{\dagger}$ Department of Statistics and Data Sciences, University of Texas at Austin}
}

\date{}

\begin{document}

\maketitle

\begin{abstract}
    \noindent  Spatial transcriptomics technologies enable the measurement of gene expression with spatial context, providing opportunities to understand how gene regulatory networks vary across tissue regions. However, existing graphical models focus primarily on undirected graphs or directed acyclic graphs, limiting their ability to capture feedback loops that are prevalent in gene regulation. Moreover, ensuring the so-called stability condition of cyclic graphs, while allowing graph structures to vary continuously with spatial covariates, presents significant statistical and computational challenges. We propose BNP-DCGx, a Bayesian nonparametric approach for learning spatially varying gene regulatory networks via covariate-dependent directed cyclic graphical models. Our method introduces a covariate-dependent random partition as an intermediary layer in a hierarchical model, which discretizes the covariate space into clusters with cluster-specific stable directed cyclic graphs. Through partition averaging, we obtain smoothly varying graph structures over space while maintaining theoretical guarantees of stability. We develop an efficient parallel tempered Markov chain Monte Carlo algorithm for posterior inference and demonstrate through simulations that our method accurately recovers both piecewise constant and continuously varying graph structures. Application to spatial transcriptomics data from human dorsolateral prefrontal cortex reveals spatially varying regulatory networks with feedback loops, identifies potential cell subtypes within established cell types based on distinct regulatory mechanisms, and provides new insights into spatial organization of gene regulation in brain tissue.
   \vspace{5pt}
    
   \noindent \textbf{Keywords:} Directed cyclic graphs; Covariate dependent partition model; Parallel tempering; MCMC; Covariate-dependent graphs  
\end{abstract}

\section{Introduction}

Single-cell RNA sequencing (scRNA-seq) has revolutionized our understanding of cellular heterogeneity and gene expression by enabling the characterization of distinct cell types within complex tissues. Despite its widespread applications in fields such as cancer research, immunology, and neuroscience, scRNA-seq alone lacks spatial context, which is crucial for addressing many biological questions. Spatial transcriptomics bridges this gap by integrating scRNA-seq with spatial localization within tissues. This spatial information provides critical insights into gene expression patterns and spatially regulated functions. For instance, brain function is closely tied to the spatial organization of neuronal and glial cells. Spatial transcriptomics techniques can be broadly classified into four categories: (i) sequencing-based (e.g., 10x Genomics Visium, Slide-seq, Stereo-seq, Light-seq); (ii) probe-based (e.g., GeoMx); (iii) imaging-based (e.g., CosMx SMI, MERFISH, seqFISH, seqFISH+); and (iv) image-guided spatially resolved scRNA-seq (e.g., Geo-seq, Zipseq).
 For an overview of spatial transcriptomics technologies, we refer to \cite{chen2023spatial}.

Building on advancements in spatial transcriptomics, a critical step in analyzing such data is to understand spatially varying gene regulatory networks (svGRNs), which capture how gene regulatory interactions change across tissue regions. Unlike traditional gene regulatory networks (GRNs), which assume uniform regulatory interactions, svGRNs account for spatial heterogeneity by incorporating cellular location and local microenvironmental influences. 
The existence of spatially varying genes, whose expression levels change across different spatial regions, and the existence of cellular heterogeneity are indications of svGRNs, as these variations in expression and cellular functions often arise from localized regulatory interactions \citep{wang2022high}. Various computational approaches have been developed to identify spatially varying genes, including model-based methods (e.g., SpatialDE \citep{svensson2018spatialde}, SPARK \citep{sun2020statistical}, SPARK-X \citep{zhu2021spark}), neighborhood-based methods (e.g., Giotto \citep{dries2021giotto}, nnSVG \citep{weber2023nnsvg}, MERINGUE \citep{miller2021characterizing}), and statistical measures such as Moran’s I \citep{moran1950notes}, as implemented in the \texttt{Seurat} package \citep{hao2023, hao2021, stuart2019, butler2018, satija2015}. While identifying spatially varying genes has been extensively studied, uncovering svGRNs is underexplored and could provide deeper insights into spatially regulated biological processes, developmental patterns, and disease mechanisms.  

Our motivating application involves analyzing spatial transcriptomics data from the human dorsolateral prefrontal cortex \citep{maynard2021transcriptome}, where cells exhibit distinct spatial organization across tissue layers. For instance, Figure \ref{fig: SpatialHeatmap} displays the spatial distribution of gene expression levels for eight selected genes from neurons, revealing clear spatial patterns and heterogeneity. Traditional GRN analyses assume uniform regulatory relationships across all cells of a given type. However, the spatial organization of brain tissue suggests that gene regulatory interactions may vary across different tissue regions, even within the same cell type. Our goal is to infer svGRNs that capture how regulatory relationships change with cellular location, potentially revealing distinct regulatory mechanisms in different microenvironments. Such analysis requires methods capable of learning directed cyclic graphs to capture feedback loops inherent in gene regulation while accounting for spatial heterogeneity.

\begin{figure}[h!]
    \centering
        \includegraphics[width = 15cm, height = 6.5cm]{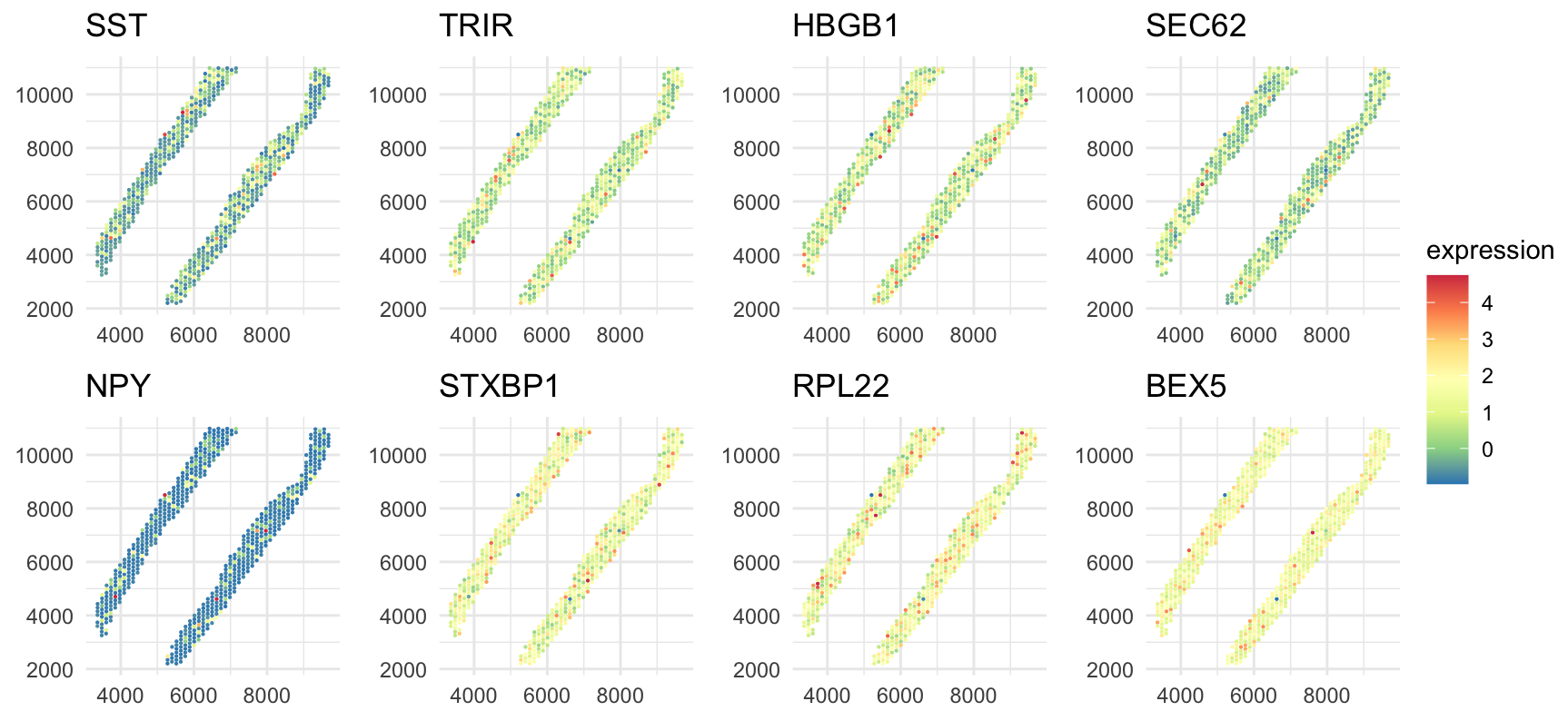}
        %\vspace{-0.25in}
	\caption{Spatial distribution of the gene expression levels for the $8$ selected genes from neurons category.}
		\label{fig: SpatialHeatmap}
\end{figure}

%%%%%%%%%%%%%%%%%%%%%%%%%%%%%%%%%%%%%%%%%%%%%%%%%%%%%%%%%%%

The construction of GRNs from gene expression data while incorporating spatial cell location information aligns closely with broader statistical research on graph learning from heterogeneous data. This area has gained significant attention due to its applications in various biological domains, such as cancer genomics, where molecular and genomic heterogeneity is prevalent \citep{heppner1983tumor}. Much of the existing work in this field focuses on leveraging auxiliary information to recover graph structures and has been dedicated to learning either covariate-assisted undirected graphs (UGs) or directed acyclic graphs (DAGs), limiting their ability to capture feedback loops and spatial variations in gene regulation.

Research on UG learning is extensive, with various studies exploring covariate-dependent Gaussian graphical models (GGMs) to infer graph structures while adjusting for covariates on the mean level \citep{yin2011sparse, li2012sparse, cai2013covariate, bhadra2013joint, JMLR:v17:16-004, chen2016asymptotically, ni2018reciprocal, deshpande2019simultaneous, niu2020bayesian}. Some approaches have considered modeling covariate-dependent precision matrices
\citep{zhou2010time, fox2015bayesian, lee2018nonparametric, xie2020identifying,  zeng2024bayesian}. While these methods improve UG structure learning in heterogeneous data, they often assume a single homogeneous UG structure across populations. To address this limitation, researchers have developed approaches for learning multiple graphs in grouped settings \citep{guo2011joint,danaher2014joint,peterson2015bayesian,mitra2016bayesian,xie2016joint,tan2017bayesian,lin2017joint,shaddox2018bayesian} and mixtures of GGMs to estimate cluster-specific graphs \citep{rodriguez2011sparse,talluri2014bayesian}. Other works have introduced regression-based or partition-based methods to allow graph structure to vary with covariates \citep{liu2010graph,ni2022bayesian,wang2022bayesian,niu2023covariate,zhang2023high,chen2023probabilistic,yao2025robust}.
In the bioinformatics literature, there are some attempts to infer gene co-expression networks from spatial transcriptomic data \citep{li2025spagrn,fang2023subcellular,detomaso2021hotspot,wei2022spatial,acharyya2022spacex}, but they too mostly rely on UGs.

DAG learning has also been widely explored, with efforts to account for population heterogeneity in grouped settings \citep{oates2014joint, yajima2015detecting}. More recent work has developed methods for continuously varying DAG structures \citep{ni2019bayesian,zhou2023individualized} and covariate-dependent quantile DAG estimation \citep{sagar2022bayesian}. However, despite the great advancements, these approaches either impose strict acyclicity constraints or fail to capture the inherent asymmetry in gene regulation, limiting their applicability in constructing GRNs with possible feedback loops.

 To address these challenges, some studies have explored learning directed cyclic graphs (DCGs), which accommodate feedback loops. Notable approaches include the cyclic causal discovery (CCD) algorithm \citep{richardson1996discovery}, the independent component analysis-based LiNG-D algorithm \citep{lacerda2012discovering}, and the cyclic additive noise models \citep{mooij2011causal}. However, these methods were not designed to accommodate covariates or model non-iid/heterogeneous data. \cite{ni2018reciprocal,ni2018heterogeneous} incorporated covariates but only on the mean level, and hence the graph remains the same for all observations. 
Unlike UG and DAG learning, one unique challenge of inferring DCGs whose structures vary with covariates is that the \textit{stability condition}, which is an eigenvalue condition on the causal effect matrix (detailed in Section \ref{Section 2}), has to be satisfied at any given covariate value or spatial location.

 In this article, we introduce a novel method, called Bayesian nonparametric directed cyclic graph with covariates (BNP-DCGx), to infer directed cyclic svGRNs from spatial transcriptomic data. To accommodate distinct graph structures influenced by spatial variation, we introduce a covariate-dependent random partition layer within a Bayesian hierarchical model, serving as an intermediary between covariates and DCGs. This hierarchical learning framework learns finite partition-dependent DCGs that satisfy the stability condition and then refines them through \textit{partition averaging}, effectively capturing gene regulatory interactions continuously varying over space.  This is, to our knowledge, the first approach that allows for the modeling of svGRNs with feedback loops while accounting for spatial heterogeneity.

The rest of the article is organized as follows. Section \ref{Section 2} introduces the notations, general formulation, and the proposed model, detailing the likelihood, prior distributions, and random partition-based cyclic graphical models. Section \ref{Section 3} discusses posterior inference using a parallel tempered Markov chain Monte Carlo (MCMC) sampler, practical implementation of partition averaging, and the theoretical guarantee of satisfying the stability condition for partition-specific DCGs.  Section \ref{Theory} establishes theoretical support for the proposed method. 
Section \ref{Section 4} provides simulations, including cluster-specific graphs
% piecewise constant 
and continuously varying covariate-dependent DCGs. Finally, Section \ref{Section 5} 
applies the proposed method to the motivating spatial transcriptomics data to infer svGRNs, followed by a discussion in Section \ref{Section 6}.

\section{Proposed Model}\label{Section 2}

\subsection{Notations, Backgrounds, and General Formulation}\label{sec:nbg}

Let $Y_i =(y_{i1}, \cdots, y_{ip})^\top \in \mathbbm{R}^p$ denote a vector of $p$ gene expressions for cell $i=1,\dots,n$. 
Let $X_i= (x_{i1}, \cdots, x_{iq})^\top \in \mathbbm{R}^q$ denote a vector of covariates, which are 2-dimensional spatial locations in our application (hence $q=2$). For any positive integer $n$, let $[n]=\{1,\dots,n\}$ and let $A_{[n]}$ be a shorthand for $A_1,\dots,A_n$.

Traditionally, constructing GRNs ignores $X_i$'s and is only based on $Y_i$'s, which are assumed to be iid. Specifically, a structural equation model \citep[SEM]{bollen1989structural} is often used,
\begin{align}\label{eq:sem0}
    Y_i = M + BY_i + E_i,
\end{align}
where $M\in \mathbbm{R}^p$ is a vector of intercepts, $B\in \mathbbm{R}^{p\times p}$ is a matrix of SEM coefficients representing gene regulatory effects in our application, and $E_i\in \mathbbm{R}^p$ is a vector of jointly independent noises. To connect SEM with a graph, we draw an arrow $j\to k$ meaning gene $j$ regulates gene $k$ if $B_{kj}\neq 0$. It can be shown that the SEM in \eqref{eq:sem0} is globally Markov with respect to the resulting graph, i.e., it respects all the conditional independence relationships encoded in the graph via the notion of d-separation \citep{spirtes1995directed}. If the graph is constrained to be acyclic, the SEM is a Bayesian network model \citep{peal1985bayesian}. However, generally, the graph need not be acyclic. For instance, if $B_{jk}\neq 0$ and $B_{kj}\neq 0$, then $j\rightleftarrows k$. For historic reasons, directed graphs without acyclic constraints are called DCGs even though they are not required to contain directed cycles. The diagonal entries of $B$ are often fixed to 0 to avoid self-loops. DCGs and SEMs have been widely used in engineering \citep{mason1953feedback} and econometrics \citep{haavelmo1943statistical} to represent dynamic processes in equilibrium with the assumption of stability \citep{fisher1970correspondence}. An SEM is \textit{stable} if the modulus of the eigenvalues of $B$ is less than or equal to 1, and no eigenvalue is 1. Under the stability condition, the SEM and its associated DCG can be viewed as a compact representation of an infinite DAG of variables indexed
by time \citep{spirtes1995directed}. 

In this paper, we will relax the iid assumption and allow the graph to vary with covariates (spatial locations) by extending  \eqref{eq:sem0} to
\begin{align}\label{eq:sem}
    Y_i = M(X_i) + B(X_i)Y_i + E_i,
\end{align}
where $M(\cdot):\mathbbm{R}^q\mapsto\mathbbm{R}^{p}$ and $B(\cdot):\mathbbm{R}^q\mapsto\mathbbm{R}^{p\times p}$ map covariates to the SEM intercepts and coefficients. Because $B(\cdot)$ encodes the graph structure, the graph can vary with covariates in our model. 
\textit{The main challenge is how to model the mapping $B(\cdot)$ so that $B(X)$ satisfies the stability condition for any $X\in\mathbbm{R}^q$.} 
\begin{definition}[Stable Function]
   We say a matrix-variate function $B(\cdot):\mathbbm{R}^q\mapsto\mathbbm{R}^{p\times p}$ is stable if $B(X)$ is stable for any $X\in\mathbbm{R}^q$
\end{definition}

Natural modeling choices such as splines and Gaussian processes would find it quite hard to ensure the stability of $B(\cdot)$. For instance, one might attempt to expand $B_{jk}(X)=\sum_{\ell=1}^L \beta_{jk\ell}\phi_{jk\ell}(X)$ for some basis functions $\{\phi_{jk\ell}(\cdot)\}$. Then it seems unattainable to introduce and implement a set of constraints on $\{\phi_{jk\ell}(\cdot)\}$ or $\{\beta_{jk\ell}\}$ to force all the eigenvalues of $B(X)$ for any $X$ to have modulus less than or equal to 1.

To overcome this challenge, we propose a novel Bayesian nonparametric covariate-dependent DCG model, BNP-DCGx, which is illustrated in Figure \ref{fig: dualview}. The main idea is to not contemplate explicit expression of $B(X)$ but instead introduce an intermediary between $B(\cdot)$ and $X$ in a Bayesian hierarchical formulation, which is designed to enable a simple strategy for imposing the stability constraints while maintaining functional flexibility. We choose the intermediary to be a covariate-dependent infinite partition, which has some desirable properties. First, conditional on the partition, $B(\cdot)$ is discretized into $B_1,B_2,\dots$, each of which complies with the stability condition straightforwardly. Second, marginalizing out the partition, we recover $B(X)$, which is guaranteed to be stable for any $X$. Hence, our model can be viewed from two different perspectives, namely, the cluster view and the functional view. Third, the partition size (i.e., the number of clusters) grows with sample size, which gives rise to a flexible representation of $B(\cdot)$. The general hierarchical formulation is as follows (for simplicity, some parameters such as the intercepts are omitted),
\begin{align}\label{eq:clusterv}
    \mathbbm{P}(Y_{[n]},S_{[L_n]}, B_{[L_n]}| X_{[n]})=\overbrace{\mathbbm{P}\left(Y_{[n]}|S_{[L_n]}, B_{[L_n]}  \right)}^{\text{Sampling distribution}}\underbrace{\mathbbm{P}\left(B_{[L_n]}\right)}_{\substack{\text{Prior distribution}\\\text{with stability constraint}}}\overbrace{\mathbbm{P}\left(S_{[L_n]}|X_{[n]}\right)}^{\substack{\text{Covariate-dependent}\\ \text{partition prior}}},
\end{align}
where $S_{[L_n]}=\{S_1,\dots,S_{L_n}\}$ is a partition of $[n]$. We call $S_\ell$ a cluster, and the number of clusters $L_n$ is allowed to grow with $n$.
A key feature of this formulation is that $B(\cdot)$ is no longer explicitly expressed as in \eqref{eq:sem}. It is replaced by countably many $B_\ell$'s for which the prior distributions, to be specified later, can easily accommodate the stability constraint. Because $B_\ell$'s are only defined conditionally on the clusters/partition, we call \eqref{eq:clusterv} the cluster view of the proposed model. 

Although $B(\cdot)$ is not explicit in  \eqref{eq:clusterv}, it is implicitly defined as
\begin{align*}
&[B(X_1),\dots,B(X_n)]|S_{[ L_n]}\; {\buildrel\rm d\over=} \;[B_{\xi_1},\dots,B_{\xi_n}],\\
&S_{[ L_n]}|X_{[n]}\sim \mathbbm{P}\left(S_{[L_n]}\middle|X_{[n]}\right),
\end{align*}
where $\xi_i$ is the cluster indicator of unit $i$ with $\xi_i=\ell$ if $i\in S_\ell$ for $\ell=1,\dots,L_n$. Marginalizing out $S_{[L_n]}$, we arrive at the functional view of the proposed model -- a mixture distribution of $B(X_1),\dots,B(X_n)$,
\begin{align*}
    \mathbbm{P}(B(X_1),\dots,B(X_n))=\sum_{S_{[ L_n]}\in \Pi_n}\mathbbm{P}(S_{[ L_n]}|X_{[n]})\mathbbm{P}(B_{[L_n]})\prod_{i=1}^nI(B(X_i)=B_{\xi_i})
\end{align*}
where $\Pi_n$ is the collection of all partitions of $[n]$ and $I(\cdot)$ is the indicator function. We will show that 
\begin{align*}
    \mathbbm{P}(B(X_i)~\text{is stable}~\forall i)=1.
\end{align*}

Denote $\mathbb{S}_p$ to be the set of all $p \times p$ stable matrices. Then the following holds,
\begin{align*}
    \begin{split}
        \mathbbm{P}\left( B(X_i) \in \mathbb{S}_p~\forall i\right) & = \mathbb{E}_{S_{[L_n]}}\big[\mathbbm{P}\left( B(X_i) \in \mathbb{S}_p ~\forall i \big | S_{[L_n]}\right)\big]\\
        & = \mathbb{E}_{S_{[L_n]}}\big[\mathbbm{P}\left( B_{\xi_i} \in \mathbb{S}_p ~\forall i \big | S_{[L_n]}\right)\big] \\
        & = 1.
    \end{split}
\end{align*}

The second equality holds because conditional on a random partition $S_{[L_n]}$, the cluster indicators $\xi_i$ are non-random quantities and $B(X_i) = B_{\xi_i}$. Moreover, each of the $B_{\xi_i}$ satisfy the stability condition.

For the rest of this section, we will specify the sampling distribution, priors on $\{B_\ell\}$ that respects the stability condition, the covariate-dependent partition prior, and priors of other parameters omitted in \eqref{eq:clusterv}.
\begin{figure}[h!]
    \centering
        \includegraphics[width = 13cm]{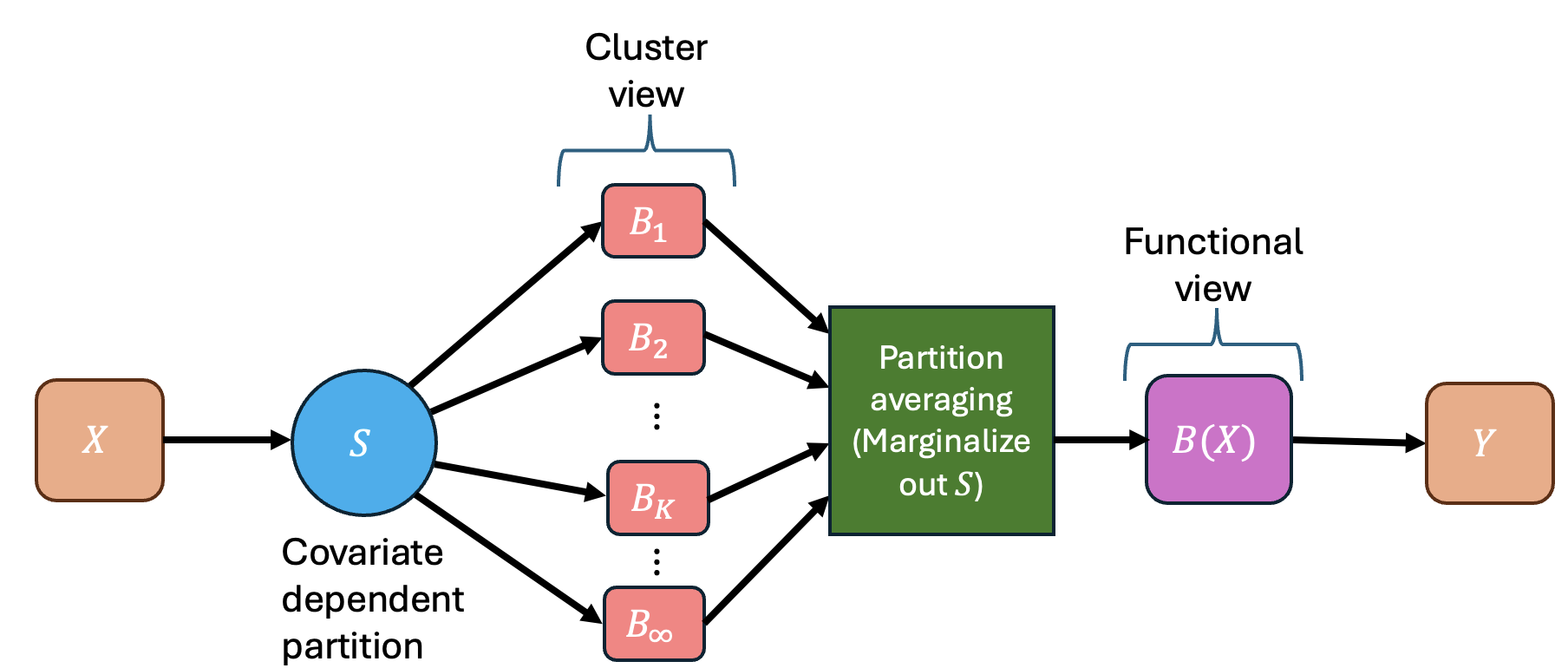}
       
	\caption{Schematic representation of BNP-DCG with dual views: cluster view and functional view. For simplicity, all the model parameters are not included in the figure. }
		\label{fig: dualview}
\end{figure}

\subsection{Sampling Distribution}
To specify the sampling distribution of $Y_i$'s is to specify the distribution $\mathbbm{P}_E$ of the noises $E_i$'s. Although Gaussian would be a tempting choice, it should be avoided for identifiability. Gaussian noises would render the same sampling distribution for Markov equivalent DCGs (i.e., DCGs that encode the same set of conditional independence relationships), and hence those DCGs cannot be distinguished from each other even with an infinite sample size. Fortunately, Gaussian noises are a peculiar case in that any non-Gaussian noises would resolve the identifiability issue thanks to the identifiability theory of independent component analysis \citep{lacerda2012discovering}. Therefore, in this paper, we choose noises to be Laplace-distributed,
\begin{align*}
    E_i|\xi_i=\ell\overset{ind}{\sim}\mathbbm{P}_E(E_i)=\prod_{j=1}^p\text{Laplace}\left(E_{ij}\middle|\sqrt{\frac{\sigma_{\ell j}}{2}}\right),
\end{align*}
for $i=1,\dots,n$ and $j=1,\dots,p$. For ease of computation, we equivalently represent the Laplace distribution as a scale mixture of normal,
\begin{align*}
    &E_i|\xi_i=\ell\overset{iid}{\sim} N_p(0, \Sigma_{\ell i}), \\
    &\Sigma_{\ell i}=\text{diag}(\sigma_{\ell 1}\tau_{i1}, \sigma_{\ell 2}\tau_{i2}, \cdots, \sigma_{\ell p}\tau_{ip}),\\
    &\tau_{ij}\overset{iid}{\sim} \text{Exp}(1).
\end{align*}

Then via the change of variables, the sampling distribution of $Y_i$'s is given by,
\begin{align*}
    \mathbbm{P}(Y_{[n]}|S_{[ L_n]}, \{ B_\ell,M_\ell,\sigma_{\ell1},\dots,\sigma_{\ell p}\}_{\ell=1}^{L_n})=\prod_{i=1}^n\mathbbm{P}_E\left((\mathbbm{I}_p - B_{\xi_i})Y_i-M_{\xi_i}\right)\text{det}(\mathbbm{I}_p - B_{\xi_i})
\end{align*}
where $\mathbbm{I}_p$ is a $p$-dimensional identity matrix and $\text{det}(\cdot)$ is matrix determinant.

\subsection{Covariate-Dependent Partition Prior}

The partition $S_{[ L_n]}$ is the crucial intermediary between $B_{[L_n]}$ and $X_{[L_n]}$, which ultimately leads to a stable $B(\cdot)$.
We consider the product partition model with covariates \citep[PPMx]{muller2011product},
\begin{align}\label{eg:PPMX}
\mathbbm{P}(S_{[ L_n]}|X_{[n]})\propto \prod_{\ell = 1}^{L_n} g(X_{S_\ell})c(S_\ell),
\end{align}
where $X_{S_\ell}=\{X_i|i\in S_\ell\}$, $g(X_{S_\ell})\geq 0$ is a similarity function characterizing the similarity among $X_{S_\ell}$, and $c(S_\ell) \geq 0$ is the cohesion function capturing the tightness of $S_\ell$. For ease of computation, we follow the recipe in \cite{muller2011product} by setting,
\begin{align}
  \label{eq:similarity}  &g(X_{S_\ell}) = \int \prod_{i \in S_\ell} \mathbbm{P}(X_i |\Psi_\ell)\mathbbm{P}(\Psi_\ell)\, d \Psi_\ell,\\
 \label{eq:cohesion}
    &c(S_\ell) = \alpha(|S_\ell| - 1)!,
\end{align}
where $|S_\ell|$ is the size of cluster $\ell$. These specifications of the similarity and the cohesion functions would later allow us to reformulate our full model with augmented parameters $\Psi_{[L_n]}$ as a Dirichlet process mixture model 
for which many existing posterior sampling algorithms can be employed.

The auxiliary probability models $\mathbbm{P}(X_i |\Psi_\ell)$ and $\mathbbm{P}(\Psi_\ell)$ in \eqref{eq:similarity} are given by,
\begin{align}
    \nonumber &X_i|\Psi_\ell \sim N_q(\mu_\ell, \Lambda_\ell),\\ 
   \label{eg:probmodelx} &\Psi_\ell=(\mu_\ell,\Lambda_\ell)\sim N_q(0, \omega \Lambda_\ell)\times \text{IW}(q, \mathbbm{I}_q),
\end{align}
where $\text{IW}(q, \mathbbm{I}_q)$ is the inverse-Wishart distribution with  degrees of freedom $q$ and an identity scale matrix $\mathbbm{I}_q$.

\subsection{Stable
Spike-and-Slab and Cascade Priors}\label{sec:sssc}

Conditional on the partition, we assign priors to the cluster-specific parameters $\{B_\ell,M_\ell, \sigma_{\ell1},\dots,\sigma_{\ell p}\}_{\ell=1}^{L_n}$.
The SEM coefficient matrix $B_\ell$ needs to be both sparse and stable. To achieve this, we propose a stable spike-and-slab (SSS) prior,
\begin{equation}
    \begin{aligned}\label{eq:sss}
    \mathbbm{P}(B_\ell|\gamma_\ell,\eta_\ell)&=\frac{1}{C_1(\gamma_\ell,\eta_\ell)} I(B_\ell\in\mathbb{S}_p)\prod_{j\neq k} \left\{\gamma_{\ell jk}N(B_{\ell jk}|0,\eta_\ell)+(1-\gamma_{\ell jk})N(B_{\ell jk}|0,\nu_0\eta_\ell)\right \},\\
    &=\frac{1}{C_1(\gamma_\ell,\eta_\ell)} I(B_\ell\in\mathbb{S}_p)\prod_{j\neq k} N_{p(p-1)}(\vec{B}_\ell|0,V_\ell),
\end{aligned}
\end{equation}
where $\mathbb{S}_p$ is the collection of all $p\times p$ stable matrices, $\gamma_{\ell jk}\in\{0,1\}$, $\nu_0$ is a very small fixed constant, $\vec{B}_\ell$ is a $p(p-1)$-dimensional vector created by stacking the columns of $B_\ell$ without the diagonal entries, $V_\ell=\eta_\ell\cdot\text{diag}(\vec{v}_\ell)$, $\vec{v}_\ell=[v_{\ell jk}]$, $v_{\ell jk}=\gamma_{\ell jk}+(1-\gamma_{\ell jk})\nu_0$, and $C(\gamma_\ell,\eta_\ell)$ is a normalizing factor,
\begin{align*}
C_1(\gamma_\ell,\eta_\ell)=\int     I(B_\ell\in\mathbb{S}_p)\prod_{j\neq k} \left\{\gamma_{\ell jk}N(B_{\ell jk}|0,\eta_\ell)+(1-\gamma_{\ell jk})N(B_{\ell jk}|0,\nu_0\eta_\ell)\right \} dB_\ell.
\end{align*}
The binary indicator $\gamma_{\ell jk}$ indicates whether $k\to j$ is present in cluster $\ell$ because the spike $N(B_{\ell jk}|0,\nu_0\eta_\ell)$ is very tightly concentrated on 0 whereas the slab $N(B_{\ell jk}|0,\eta_\ell)$ is dispersed. 

Because the SSS prior enforces stability via $I(B_\ell\in\mathbb{S}_p)$, the normalizing factor $C_1(\gamma_\ell,\eta_\ell)$ is intractable. To avoid evaluating it in the calculation of the full conditional distributions of $(\gamma_\ell,\eta_\ell)$, we assume the following joint prior on $(\gamma_\ell,\eta_\ell)$,
\begin{align}\label{eq:gammaeta}
    \mathbbm{P}(\gamma_\ell,\eta_\ell|\phi_\ell)=\frac{1}{C_2(\phi_\ell)}C_1(\gamma_\ell,\eta_\ell)\text{IG}(\eta_\ell|a_\eta,b_\eta)\prod_{j\neq k}\text{Bernoulli}(\gamma_{\ell jk}|\phi_\ell),
\end{align}
where ${IG}(\cdot|a_\eta,b_\eta)$ is the inverse-gamma distribution and the normalizing factor is given by,
\begin{align*}
    C_2(\phi_\ell)=\int\int C_1(\gamma_\ell,\eta_\ell)\text{IG}(\eta_\ell|a_\eta,b_\eta)\prod_{j\neq k}\text{Bernoulli}(\gamma_{\ell jk}|\phi_\ell)d\gamma_\ell d\eta_\ell.
\end{align*}
Including $C_1(\gamma_\ell,\eta_\ell)$ in $\mathbbm{P}(\gamma_\ell,\eta_\ell|\phi_\ell)$ serves to cancel out its reciprocal in $\mathbbm{P}(B_\ell|\gamma_\ell,\eta_\ell)$ when deriving the full conditional distributions for $\gamma_\ell$ and $\eta_\ell$. With this specification, their full conditionals are Bernoulli and inverse-gamma.  In the same vein, we include $C_2(\phi_\ell)$ in the prior of $\phi_\ell$ to have closed-form full conditional,
\begin{align}\label{eq:rho}
    \mathbbm{P}(\phi_\ell)=\frac{1}{C_3}C_2(\phi_\ell)\text{beta}(\phi_\ell|a_\phi,b_\phi),
\end{align}
where 
\begin{align*}
    C_3=\int C_2(\phi_\ell)\text{beta}(\phi_\ell|a_\phi,b_\phi)d\phi_\ell.
\end{align*}
Although none of the normalizing factors $C_1(\gamma_\ell,\eta_\ell),C_2(\phi_\ell)$, $C_3$ is analytically available, they are all finite (e.g., $C_1$ is bounded by 1 because it is an integral of a probability density function over a subspace) and never have to be evaluated in our posterior inference algorithm. Because $C_1$ cascades through $C_3$ via $C_2$, we call the joint prior \eqref{eq:sss}-\eqref{eq:rho} as the cascade prior.

We complete our model with conjugate priors on the intercepts and the noise scale parameters,
\begin{align*}
&M_\ell \sim N_p(0, \lambda \mathbbm{I}_p),\\
&\sigma_{\ell j} \sim \text{IG}(a_\sigma,b_\sigma).
\end{align*}

\subsection{Partition Averaging }\label{sec:pa}

As mentioned in Section \ref{sec:nbg}, the proposed BNP-DCGx has dual views. Conditional on the partition, the model identifies cluster-specific DCGs, whereas with the partition marginalized out, the functional view is manifested,
\begin{align*}
    &\mathbbm{P}(B(X_1),\dots,B(X_n))=\sum_{S_{[L_n]}\in \Pi_n}\mathbbm{P}(S_{[n]}|X_{[n]})\mathbbm{P}(B_{[L_n]})\prod_{i=1}^nI(B(X_i)=B_{\xi_i})\\
    =&\sum_{S_{[L_n]}\in \Pi_n}\mathbbm{P}(S_{[L_n]}|X_{[n]})\times \prod_{\ell=1}^{L_n}\frac{1}{C_1(\gamma_\ell,\eta_\ell)} I(B_\ell\in\mathbb{S}_p)\prod_{j\neq k} N_{p(p-1)}(\vec{B}_\ell|0,V_\ell)\times \prod_{i=1}^nI(B(X_i)=B_{\xi_i})
\end{align*}
where $\mathbbm{P}(S_{[L_n]}|X_{[n]})$ is defined in \eqref{eg:PPMX}-\eqref{eq:cohesion}. 

Although complicated in form, $\mathbbm{P}(B(X_1),\dots,B(X_n))$ is a discrete mixture of truncated, degenerated multivariate Gaussians with mixture weights $\mathbbm{P}(S_{[L_n]}|X_{[n]})$. The truncation is due to the stability constraint $B_\ell\in\mathbb{S}_p$, and the degeneracy $B(X_i)=B_{\xi_i}$
 reflects the piecewise constant nature of the representation of $B(X)$ via partition.

For an unseen observation $n+1$, the posterior predictive distribution of $B(X_{n+1})$ is given by
\begin{align}\label{eq:ppa}
\begin{split}
    & \mathbbm{P}(B(X_{n+1})|Y_{[n]}, X_{[n]}, X_{n+1})\\& = \sum_{S_{[n]} \in \Pi_n} \sum_{\ell= 1}^{L_n+1} \mathbbm{P}\left(S_{[L_n]}^{(\ell)}|Y_{[n]}, X_{[n]},X_{n+1}\right) \mathbbm{P}(B_\ell|S_{[L_n]}, Y_{[n]}, X_{[n]})I(B(X_{n+1})=B_\ell)
\end{split}
\end{align}
where $S_{[L_n]}^{(\ell)}=S_{[L_n]}\backslash S_\ell\cup\{S_\ell\cup \{n+1\}\}$, which is a partition of $[n+1]$ obtained by adding unit $n+1$ to cluster $S_\ell$ in $S_{[L_n]}$.
We show that the stability condition is preserved with probability one.

\begin{proposition} 
Under the proposed model, $\mathbbm{Pr}(B(X_{n+1})\in \mathbb{S}_p|Y_{[n]},X_{[n]},X_{n+1})=1$.
\end{proposition}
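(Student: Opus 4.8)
The plan is to mirror exactly the marginalization argument already displayed in Section~\ref{sec:nbg} for $\mathbbm{P}(B(X_i)\in\mathbb{S}_p~\forall i)=1$, but now carried out at the level of the posterior predictive distribution \eqref{eq:ppa} rather than the prior. The key observation is that $B(X_{n+1})$ is, by construction, a draw from a discrete mixture whose every component is one of the cluster-specific matrices $B_\ell$, and each $B_\ell$ is supported on $\mathbb{S}_p$ by the indicator $I(B_\ell\in\mathbb{S}_p)$ appearing in the SSS prior \eqref{eq:sss}. Conditioning destroys nothing: the posterior and posterior predictive distributions are absolutely continuous with respect to the prior, so the support restriction to $\mathbb{S}_p$ is inherited.

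Concretely, I would proceed as follows. First, write $\mathbbm{Pr}(B(X_{n+1})\in\mathbb{S}_p\mid Y_{[n]},X_{[n]},X_{n+1})$ and expand it using \eqref{eq:ppa}, interchanging the event with the sum over partitions $S_{[n]}\in\Pi_n$ and over the candidate receiving cluster $\ell\in\{1,\dots,L_n+1\}$. Second, for each fixed term in that double sum, note that conditional on $S_{[L_n]}$ and on the augmented partition $S_{[L_n]}^{(\ell)}$, the assignment $B(X_{n+1})=B_\ell$ is deterministic, so
\begin{align*}
\mathbbm{Pr}\!\left(B(X_{n+1})\in\mathbb{S}_p \,\middle|\, S_{[L_n]}^{(\ell)}, Y_{[n]},X_{[n]}\right) = \mathbbm{Pr}\!\left(B_\ell\in\mathbb{S}_p \,\middle|\, S_{[L_n]}, Y_{[n]},X_{[n]}\right).
\end{align*}
Third, argue that the posterior law of $B_\ell$ given $(S_{[L_n]},Y_{[n]},X_{[n]})$ places all its mass on $\mathbb{S}_p$: its density is proportional to the likelihood times the SSS prior, and the SSS prior already carries the factor $I(B_\ell\in\mathbb{S}_p)$, so multiplying by a nonnegative likelihood cannot move mass outside $\mathbb{S}_p$ (the likelihood $\prod_i \mathbbm{P}_E((\mathbbm{I}_p-B_{\xi_i})Y_i-M_{\xi_i})\det(\mathbbm{I}_p-B_{\xi_i})$ is finite and nonnegative, and for $B_\ell\in\mathbb{S}_p$ we have $\det(\mathbbm{I}_p-B_\ell)\neq 0$ since $1$ is not an eigenvalue of $B_\ell$, so the posterior is a well-defined probability measure on $\mathbb{S}_p$). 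Hence each conditional probability in the sum equals $1$. Finally, substitute back: the double sum collapses to $\sum_{S_{[n]}}\sum_\ell \mathbbm{P}(S_{[L_n]}^{(\ell)}\mid Y_{[n]},X_{[n]},X_{n+1})\cdot 1 = 1$, since the mixture weights over all partitions of $[n+1]$ sum to one.

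The argument is essentially a tower-property/law-of-total-probability computation, so I do not anticipate a genuine obstacle; the only point requiring a little care is the third step, namely making rigorous that conditioning on the data does not leak probability mass off the constraint set $\mathbb{S}_p$. This is immediate once one writes the posterior of $B_\ell$ via Bayes' rule and observes that the prior's indicator $I(B_\ell\in\mathbb{S}_p)$ survives, but it is worth stating explicitly that the normalizing constant of the posterior is strictly positive (there is positive posterior mass on $\mathbb{S}_p$, e.g.\ a neighborhood of $B_\ell = 0$, which is stable), so the conditional distribution is not vacuous. One should also note that $\mathbb{S}_p$ is a measurable (indeed Borel) subset of $\mathbbm{R}^{p\times p}$, so the events in question are well-defined. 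With these remarks in place the proof is a two-line chain of equalities identical in spirit to the prior-level computation already given before the statement of the proposition.
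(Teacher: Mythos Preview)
Your proposal is correct and is essentially the same tower-property argument the paper gives: condition on the (augmented) cluster assignment so that $B(X_{n+1})=B_\ell$ deterministically, invoke the indicator $I(B_\ell\in\mathbb{S}_p)$ in the SSS prior to conclude the inner probability is $1$, then average out. The paper writes this in two lines by conditioning directly on $\xi_{n+1}$ rather than expanding via \eqref{eq:ppa}, and it simply asserts that the SSS prior forces $B_\ell\in\mathbb{S}_p$ without your additional remarks on posterior non-vacuity and measurability; those remarks are fine but not needed at the level of detail the paper adopts.
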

\begin{proof} 
\begin{align*}
        & \mathbbm{Pr} \left( B(X_{n+1})\in \mathbb{S}_p | Y_{[n]}, X_{[n]}, X_{n+1} \right) \\
         =& \mathbbm{E}_{\xi_{n+1}} \left[ \mathbbm{Pr} \left( B(X_{n+1}) \in \mathbb{S}_p | Y_{[n]}, X_{[n]}, X_{n+1},\xi_{n+1} \right)  \right] \\
         =& \mathbbm{E}_{\xi_{n+1}} \left[ \mathbbm{Pr} \left( B_{\xi_{n+1}}\in \mathbb{S}_p | Y_{[n]}, X_{[n]}, X_{n+1},\xi_{n+1} \right)  \right] \\
        =&\mathbbm{E}_{\xi_{n+1}}[1]\\
        =& 1
\end{align*}
The second equality is because conditional on $\xi_i$, $B(X_i)=B_{\xi_i}$ for any $i$. The third equality is because the SSS prior proposed in Section \ref{sec:sssc} ensures $B_\ell\in \mathbb{S}_p$ for any $\ell$.
\end{proof}

\section{Posterior Inference}\label{Section 3}

\subsection{Markov Chain Monte Carlo}\label{MCMC}

The proposed model is parameterized by $M_\ell,B_\ell,\sigma_{\ell j}, \mu_\ell,\Lambda_\ell,\gamma_\ell,\eta_\ell,\phi_\ell,\tau_{ij},\xi_i$ for $\ell\in[L_n]$, $i\in [n]$, and $j\in [p]$. All but $B_\ell$ and $\xi_i$ is sampled via closed-form Gibbs. Metropolis-within-Gibbs is used for sampling $B_\ell$ and $\xi_i$. 
All sampling steps are provided below.

\begin{itemize}
    \item Sample $\mathbf{\phi_\ell}\sim \text{beta}\left(\sum_{j \neq k}\gamma_{\ell jk} + a_\phi, p^2 - p - \sum_{j \neq k}\gamma_{\ell jk} + b_\phi\right)$.

    \item Sample $\sigma_{\ell j}\sim \text{IG}\left(a_\sigma + \frac{|S_\ell|}{2}, 
    b_\sigma + \frac {1}{2} \sum_{i:\xi_i=\ell} \frac{r^2_{ij}}{\tau_{ij}}\right)$ where $r_{ij}=y_{ij}-\sum_{k\neq j}B_{\ell jk}y_{ik}-M_{\ell j}$.
    
    \item Sample $M_\ell\sim N_p\left(\widetilde V \left(\sum_{i:\xi_i=\ell} Y_i^\top (I - B_\ell)^\top \Sigma_{\ell I)}^{-1}\right)^\top, \widetilde V\right)$ where $\widetilde V = \left(\frac{1}{\lambda} \mathbbm{I}_p +  \sum_{i:\xi_i=\ell} \Sigma_{\ell i}^{-1}\right)^{-1}$.

    \item Sample $\gamma_{\ell jk}\sim \text{Bernoulli}(\widetilde \theta)$ where 
    $\widetilde \theta = \frac{\phi_\ell N\left(B_{\ell jk}|0,  \eta_\ell\right)}{\phi_\ell N\left(B_{\ell jk}|0,  \eta_\ell\right) + (1 - \phi_\ell) N\left(B_{\ell jk}|0,\nu_0\eta_\ell\right)}$.

    \item Sample $\eta_\ell\sim \text{IG}\left(a_\eta + \frac{1}{2}(p^2 - p), b_\eta + \frac{1}{2}\sum_{j\neq k}\widetilde B_{\ell jk}^2\right)$ where $\widetilde B_{\ell jk}=\gamma_{\ell jk}B_{\ell jk}+(1-\gamma_{\ell jk})B_{\ell jk}/\sqrt{\nu_0}$.

    \item Sample $B_{\ell jk}$ by random-walk Metroloplis-within-Gibbs. We propose $B_{\ell jk}^*\sim N(B_{\ell jk},\tau_{\text{prop}})$. If the resulting $B_\ell^*$ is not stable, reject it; otherwise, accept it with probability
    \begin{align}\label{eq:MHforB}
       \min\Bigg\{1, \frac{\mathbbm{P}\left( Y_{S_\ell} | B^*_\ell, M_\ell, \sigma_\ell \right) }{\mathbbm{P}\left( Y_{S_\ell} | B_\ell, M_\ell, \sigma_\ell \right) } \Bigg\},
    \end{align}
    where $Y_{S_\ell}=\{Y_i|i\in S_\ell\}$, $\sigma_\ell=\{\sigma_{\ell 1},\dots,\sigma_{\ell p}\}$, and $\mathbbm{P}\left( Y_{S_\ell} | B_\ell, M_\ell, \sigma_\ell \right)$ is the marginal likelihood with $\tau_{ij}$'s integrated out, 
    \begin{align}\label{eq:jointLaplace}
        \mathbbm{P}\left( Y_{S_\ell} | B_\ell, M_\ell, \sigma_\ell \right) \propto 
        \text{det}(\mathbbm{I}_p - B_\ell)^{|S_\ell|}
        e^{- \sum_{i:\xi_i=\ell} \sum_{j = 1}^p \sqrt{\frac{2}{\sigma_{\ell j}}} \big|r_{ij}\big| }.
    \end{align}

    \item Sample $\tau_{ij}\sim \text{GIG}\left(2, \frac{r^2_{ij}}{\sigma_{\xi_ij}} , \frac12\right )$ where $\text{GIG}(a,b,c)$ is the generalized inverse Gaussian distribution with density proportional to
    ${\tau_{ij}^{-\frac 12}} e^{-\frac 12 \left( 2 \tau_{ij} + \frac{r^2_{ij}}{\sigma_j \tau_{ij}} \right)} $.

    \item Sample $\xi_i$ from a categorical distribution with probability 
\begin{align}\label{eq:collapsedGibbs}
    \mathbbm{Pr}(\xi_i=\ell | \xi_{-i}, Y_{[n]}, X_{[n]}, B_{[L_n]}, \sigma_{[L_n]}, \tau_i) \propto \mathbbm{Pr}(\xi_i=\ell|\xi_{-i})\mathbbm{P}(Y_i|Y_{-i}^\ell, B_\ell, \Sigma_{\ell i})\mathbbm{P}(X_i|X_{-i}^\ell),
\end{align}
where $\xi_{-i}=\{\xi_a|a\neq i\}$, and $Y_{-i}^\ell=\{Y_a|a\in S_\ell\backslash \{i\}\}$ and $X_{-i}^\ell=\{X_a|a\in S_\ell\backslash \{i\}\}$ are the observations in the $\ell^{\text{th}}$ cluster except for the $i^{\text{th}}$ one. For better mixing, we have marginalized out $M_\ell$ in \eqref{eq:collapsedGibbs}. The prior conditional distribution of $\xi_i$ is the well-known Chinese restaurant process, 
\begin{align}\label{eq:crp}
\mathbbm{Pr}(\xi_i = \ell|\xi_{-i}) = 
\begin{cases}
\frac{N^j_{-i}}{n + \alpha - 1}, \hspace{5pt} \text{if} \hspace{5pt} \ell \hspace{5pt} \text{is one of the existing clusters}  \\
\frac{\alpha}{n + \alpha - 1},  \hspace{5pt} \text{if} \hspace{5pt} \ell \hspace{5pt} \text{is a new cluster}
\end{cases}
\end{align}

where $N^\ell_{-i}=|S^\ell_{-i}|$ with $S^\ell_{-i}=S_\ell\backslash \{i\}$. 
The conditional distribution of $Y_i$ is given by,
\begin{align}\label{eq:postpredY}
    \mathbbm{P}(Y_i|Y_{-i}^\ell, B_\ell, \Sigma_{\ell i}) = 
    \begin{cases}
    \begin{split}
         & N_p\left(\mathcal{M}^Y_\ell, \mathcal{V}^Y_\ell \right), \hspace{5pt} \text{if} \hspace{5pt} \ell \hspace{5pt} \text{is one of the existing clusters} \\ \\
         & N_p\left(0, \mathcal{V}^Y_* \right),  \hspace{5pt} \text{if} \hspace{5pt} \ell \hspace{5pt} \text{is a new cluster}
    \end{split}
    \end{cases}
\end{align}
with
\[\mathcal{M}^Y_\ell = (\mathbbm{I}_p - B_\ell)^{-1} \left[N_{-i}^\ell \mathbbm{I}_p + \frac{1}{\lambda} \Sigma_{\ell i}\right]^{-1}(\mathbbm{I}_p - B_\ell)\sum_{i' \in S^{\ell}_{-i}} Y_{i'}, \]

\[\mathcal{V}^Y_\ell = (\mathbbm{I}_p - B_\ell)^{-1} \Bigg[\mathbbm{I}_p - \left[(N_{-i}^\ell + 1)\mathbbm{I}_p + \frac{1}{\lambda}\Sigma_{\ell i}\right]^{-1} \Bigg]^{-1} (\mathbbm{I}_p - B_\ell)^{-\top},  \]

\[\mathcal{V}^Y_* = (\mathbbm{I}_p - B^*)^{-1} \Bigg[ \mathbbm{I}_p - \left[\mathbbm{I}_p + \frac{1}{\lambda}\Sigma^*\right]^{-1} \Bigg]^{-1} (\mathbbm{I}_p - B^*)^{-\top},\] 
where $\Sigma^*=\text{diag}(\sigma^*_1\tau_{i1},\dots,\sigma^*_p\tau_{ip})$, and $\sigma_j^*$ and $B^*$ are drawn from their respective prior distributions. 
The conditional distributions for $X_i$ is multivariate t distribution,
\begin{align}\label{eq:postpredX}
 \mathbbm{P}(X_i|X_{-i}^\ell) = 
    \begin{cases}
       & t_{N^\ell_{-i} + 1} \left( \frac{\omega}{1 + N^\ell_{-i}\omega} \sum_{i' \in S^{\ell}_{-i}} X_{i'}, \mathcal{V}^X_\ell  \right), \hspace{5pt} \text{if} \hspace{5pt} \ell \hspace{5pt} \text{is one of the existing clusters}\\
       & t_1 \left( 0, (1 + \omega)\mathbbm{I}_q \right), \hspace{5pt} \text{if} \hspace{5pt} \ell \hspace{5pt} \text{is a new cluster}
    \end{cases}
\end{align}

with
\[\mathcal{V}^X_\ell=\frac{\omega + 1 + N^\ell_{-i}\omega}{(N^\ell_{-i} + 1)(1 + N^\ell_{-i} \omega)}\left[\mathbbm{I}_q - \frac{\omega}{1 + N^\ell_{-i}\omega} \left(\sum_{i' \in S^{\ell}_{-i}} X_{i'} \right)\left(\sum_{i' \in S^{\ell}_{-i}} X_{i'} \right)^\top + \sum_{i' \in S^{\ell}_{-i}} X_{i'} X_{i'}^\top \right].\]

Following Algorithm $8$ from \cite{neal2000markov}, we draw $\xi_i^*$ from \eqref{eq:collapsedGibbs}. 

\end{itemize}

For better mixing, we employ the parallel tempering technique \citep{swendsen1986replica,geyer1991markov} by running $W$ Markov chains in parallel with different but related target distributions. The target distribution is proportional to the fractional power of the likelihood multiplied by the prior. In particular for a sequence of increasing  ``temperature" values $1 = T_1 < \cdots < T_{W}$, we have the target distribution as the following, 
\begin{align}\label{eq:tempPosterior}
\begin{split}
   & \left[\prod_{i = 1}^n \mathbbm{P}\left( Y_i|B_{\xi_i}, M_{\xi_i}, \sigma_{\xi_i}, \tau \right)\mathbbm{P}\left(X_i|\mu_{\xi_i}, \Lambda_{\xi_i}\right)\right]^{1/t}  \mathbbm{P}\left(B, M, \gamma, \eta, \theta, \sigma, \tau, \xi, \mu, \Lambda\right),
\end{split}
\end{align} for any $t \in \{T_1, \cdots, T_{W}\}$. Choosing an increasing sequence of ``temperature" values ensures we have our desired posterior distribution for $T_1 = 1$ along with a set of heated chains. This helps the algorithm explore the parameter space more efficiently by using a flattened likelihood, which is less susceptible to being trapped in local modes.

In addition to the update of parameters within each chain (as described in Section \ref{MCMC}), once every certain number of iterations, we propose to swap the states of two chains (say $C_1$ and $C_2$) and accept it according to a Metropolis-Hastings (MH) ratio,
\begin{align}
 \min\left\{ 1,  \frac{\left[\prod_{i = 1}^n \mathbbm{P}\left( Y_i|B^{(C_1)}_{\xi^{(C_1)}_i}, M^{(C_1)}_{\xi^{(C_1)}_i}, \sigma^{(C_1)}_{\xi^{(C_1)}_i} \right) \right]^{\frac{1}{T_{C_2}}} \left[\prod_{i = 1}^n \mathbbm{P}\left( Y_i|B^{(C_2)}_{\xi^{(C_2)}_i}, M^{(C_2)}_{\xi^{(C_2)}_i}, \sigma^{(C_2)}_{\xi^{(C_2)}_i} \right)\right]^{\frac{1}{T_{C_1}}}}
    {\left[\prod_{i = 1}^n \mathbbm{P}\left( Y_i|B^{(C_1)}_{\xi^{(C_1)}_i}, M^{(C_1)}_{\xi^{(C_1)}_i}, \sigma^{(C_1)}_{\xi^{(C_1)}_i} \right) \right]^{\frac{1}{T_{C_1}}} \left[\prod_{i = 1}^n \mathbbm{P}\left( Y_i|B^{(C_2)}_{\xi^{(C_2)}_i}, M^{(C_2)}_{\xi^{(C_2)}_i}, \sigma^{(C_2)}_{\xi^{(C_2)}_i} \right)\right]^{\frac{1}{T_{C_2}}}} \right\},
\end{align} where the superscript indexes the chain (e.g., $\xi^{(C_1)}_i$ is the cluster-specific indicators for the $i^{\text{th}}$ observation obtained from the chain $C_1$). 

\subsection{Implementation of Partition Averaging with Stability Guarantee}\label{StabilityPartition}

The partition averaging described in Section \ref{sec:pa} can be carried out straightforwardly with MCMC samples. 
For instance, for an unseen observation $n+1$, the posterior mean of $B(X_{n+1})$ can be computed as follows:
\begin{align*}
    &\mathbbm{E}\left[B(X_{n+1})|Y_{[n]},X_{[n]},X_{n+1}\right]\\
    =&\int B(X_{n+1})\sum_{S_{[L_n]} \in \Pi_n} \sum_{\ell= 1}^{L_n+1} \mathbbm{P}\left(S_{[L_n]}^{(\ell)}|Y_{[n]}, X_{[n]},X_{n+1}\right) \mathbbm{P}(B_\ell|S_{[L_n]}, Y_{[n]}, X_{[n]})I(B(X_{n+1})=B_\ell)dB(X_{n+1})\\
    \approx & \frac{1}{T}\sum_{t=1}^TB_{\xi_{n+1}^t}^{t},
\end{align*}
where $t$ indexes the MCMC sample and  $\xi_{n+1}^t$ can be drawn from the following conditional distribution, 
\begin{align*}
    \mathbbm{P}(\xi_{n+1} | \xi_{[n]}^t, X_{[n]},X_{n+1}) \propto \mathbbm{P}(\xi_{n+1}|\xi_{[n]}^t)\mathbbm{P}(X_{n+1}|X_{[n]}),
\end{align*}
for which the closed-form expressions are similar to \eqref{eq:crp} and \eqref{eq:postpredX}.

\section{Large prior support}\label{Theory}
Recall that our prior model entails $B(X_i)=B_{\xi_i}$, i.e., $B(X)$ is a piecewise constant function. Despite the discrete nature, we show that our prior places positive mass around any $\nu\mbox{-}$H$\Ddot{o}$lder continuous $B(\cdot)$ and $M(\cdot)$, or in other words, the proposed model is flexible enough for modeling continuously varying DCG.

\begin{theorem}\label{thm:BM}
 Suppose $(B_0(\cdot))_{p \times p}$ be any matrix of SEM coefficients and $(M_0(\cdot))_{p \times 1}$ be the intercept vector
as in Equation~\eqref{eq:sem0},
where $B_0(\cdot)$, $M_0(\cdot)$ are continuous functions for univariate $x \in (a,b]$ with $-\infty<a<b<\infty$. We assume that all the entries of $B_0(\cdot)$ and $M_0(\cdot)$ belong to the space of uniformly $\nu\mbox{-}$H$\Ddot{o}$lder continuous functions. Then for any $\epsilon>0$, there exists $K^* \in \mathbb{N}$ such that 
 \begin{itemize}
  \item  [(i)] For any absolutely continuous prior distributions on $(B_j, \mu_j, \lambda_j) \in (\mathbb{S}_p, \mathbb{R}, \mathbb{R}^{+})$ and on \\
  \noindent $(\pi_1, \cdots, \pi_{K^\ast})$ we have,
\begin{align*}
\begin{split}
     &\mathbb{P}\left(\left\| B_0 -\sum_{j = 1}^{K^*} \pi_j(x)B_j \right\|_1 < \epsilon \right) > 0\\
 \end{split}
\end{align*}
where $\mathbb{S}_p$ be the set of $p \times p$ stable matrices and $\pi_j(x) = \frac{\pi_j N \left(x; \mu_j, \lambda_j \right)}{\sum_{j = 1}^{K^\ast} \pi_j N \left(x; \mu_j, \lambda_j \right)}$ for $j \in \{1, \cdots, K^*\}$. 
 
 \item [(ii)] Similarly, for any absolutely continuous prior distributions on $(M_j, \mu_j, \lambda_j) \in (\mathbb{R}^{p}, \mathbb{R}, \mathbb{R^+})$ and on $(\pi_1, \cdots, \pi_{K^\ast})$ we have,
\begin{align*}
 \begin{split}
     &\mathbb{P}\left(\left\| M_0 -\sum_{j = 1}^{K^*} \pi_j(x)M_j \right\|_1 < \epsilon \right) > 0\\
 \end{split}
\end{align*}
\end{itemize}
where $\pi_j(x)$ is of the same form as part $(i)$ of the theorem.
 \end{theorem}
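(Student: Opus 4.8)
Here is the plan.

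The plan is to reduce Theorem~\ref{thm:BM} to a purely deterministic approximation statement and then to a positive-mass argument. The deterministic fact I would establish first is that any uniformly $\nu$-H\"older continuous matrix function $B_0$ on the bounded interval $(a,b]$ can be approximated, uniformly in $x$ and to within any prescribed $\epsilon$, by a Gaussian-kernel weighted average $\sum_{j=1}^{K^*}\pi_j(x)B_j$ of constant matrices $B_j$ placed on a sufficiently fine grid. Concretely, I would fix $K^*$ so large that $h:=(b-a)/K^*$ satisfies $Lh^\nu<\epsilon/4$, where $L$ is the H\"older constant, put the grid centers at $\mu_j^\ast=a+(j-\tfrac12)h$, take equal weights $\pi_j=1/K^*$ and a common small bandwidth $\lambda_j=\lambda$ so that $\pi_j(x)=N(x;\mu_j^\ast,\lambda)/\sum_k N(x;\mu_k^\ast,\lambda)$, and choose $B_j$ close to a slight contraction of $B_0(\mu_j^\ast)$.

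The core estimate is a Nadaraya--Watson-type bound. Writing $\sum_j\pi_j(x)B_j-B_0(x)=\sum_j\pi_j(x)\big(B_j-B_0(\mu_j^\ast)\big)+\sum_j\pi_j(x)\big(B_0(\mu_j^\ast)-B_0(x)\big)$, I would bound the first sum by $\max_j\|B_j-B_0(\mu_j^\ast)\|_1$ (using $\sum_j\pi_j(x)=1$), and split the second according to whether $|\mu_j^\ast-x|\le h$ or not: the near terms contribute at most $Lh^\nu<\epsilon/4$, and the far terms at most $2\big(\sup_x\|B_0(x)\|_1\big)\sum_{|\mu_j^\ast-x|>h}\pi_j(x)$, where the far-weight sum is at most $K^*\exp(-3h^2/8\lambda)$ because for every $x$ there is a center within $h/2$. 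Since $h$ is fixed by $K^*$, this last bound is uniform in $x$ and tends to $0$ as $\lambda\downarrow0$; so for all small enough $\lambda$ the whole expression is $<\epsilon$ uniformly over $(a,b]$. Fix such a $\lambda$.

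Then I would pass from this single configuration to an open set of configurations. The map $(\pi_{[K^*]},\mu_{[K^*]},\lambda_{[K^*]},B_{[K^*]})\mapsto\sup_x\|B_0(x)-\sum_j\pi_j(x)B_j\|_1$ is continuous at the constructed point, since Gaussian densities and matrix spectra depend continuously on their arguments and uniform continuity of $B_0$ converts small shifts of the $\mu_j$ into small changes of $B_0(\mu_j)$; hence there is an open neighborhood $U$ of that point on which the norm stays $<\epsilon$. To keep $U\subset(\mathbb{S}_p)^{K^*}\times\mathbb{R}^{K^*}\times(\mathbb{R}^{+})^{K^*}\times\Delta_{K^*-1}$ I would take $B_j^\ast=(1-\kappa)B_0(\mu_j^\ast)$ for small $\kappa>0$: its spectral radius is $(1-\kappa)$ times that of $B_0(\mu_j^\ast)$, hence strictly less than $1$, so $B_j^\ast\in\operatorname{int}(\mathbb{S}_p)$ and a small $\ell_1$-ball around it stays in $\mathbb{S}_p$ — this is precisely how the possibility that $B_0(x)$ lies on the (topological) boundary of $\mathbb{S}_p$ is handled — while the extra error $\|B_j^\ast-B_0(\mu_j^\ast)\|_1\le\kappa\sup_x\|B_0(x)\|_1$ is absorbed into the slack. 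Finally, any prior on $(B_j,\mu_j,\lambda_j)_{j\le K^*}$ and on $(\pi_1,\dots,\pi_{K^*})$ that is absolutely continuous with density positive on $U$ (in particular any such prior with full support) assigns $U$ positive probability, which is the asserted inequality. Part (ii) is the same argument with $B_0,B_j,\mathbb{S}_p$ replaced by $M_0,M_j,\mathbb{R}^p$ and with $B_j^\ast=M_0(\mu_j^\ast)$ taken directly, no contraction being needed since there is no stability constraint.

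The step I expect to be the main obstacle is the uniform-in-$x$ kernel bound combined with the stability constraint: one must make the approximation error small simultaneously for all $x\in(a,b]$ — including $x$ near the endpoints and $x$ midway between two grid centers — while the slab matrices $B_j$ are forced to lie in $\mathbb{S}_p$, a set that is neither open nor closed. The fixed-grid device (every $x$ has a center within $h/2$, making the Gaussian-tail estimate uniform) resolves the first difficulty, and the $(1-\kappa)$-contraction into $\operatorname{int}(\mathbb{S}_p)$ resolves the second; the remainder is triangle-inequality bookkeeping together with the standard fact that absolutely continuous priors charge open subsets of their support.
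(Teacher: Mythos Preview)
Your proposal is correct and reaches the same conclusion, but the route is genuinely different from the paper's. The paper first proves a lemma showing that any H\"older matrix function can be $\|\cdot\|_1$-approximated by a piecewise constant $\sum_{k,l}\mathrm{1}_{(a_{kl},a_{k(l+1)}]}(x)B_{0kl}$, then splits the error as $T_1+T_2+T_3$ where $T_2$ measures how well the normalized Gaussian weights $\pi_{kl}(x)$ approximate the indicator functions in $L^1$; this last step is carried out via a double grid $K\times K'$ and a lengthy case analysis ($I_1^\ast$--$I_4^\ast$, then $I_{41}^\ast$--$I_{44}^\ast$) of the integral over sub-subintervals. You bypass the indicator/piecewise-constant intermediary entirely and argue directly in the sup norm via a Nadaraya--Watson near/far decomposition, so your kernel-tail bound $K^\ast\exp(-3h^2/8\lambda)$ replaces all of the paper's $I^\ast$ computations in one line. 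Since $\sup_x\|\cdot\|$ dominates $\int_a^b\|\cdot\|\,dx$ on a bounded interval, your uniform bound immediately delivers the $\|\cdot\|_1$ statement. Your $(1-\kappa)$-contraction to force $B_j^\ast\in\operatorname{int}(\mathbb{S}_p)$ is an improvement: the paper simply asserts $\mathbb{P}(\|B_{0kl}-B_{kl}\|_1\le\epsilon/3)>0$ from absolute continuity without addressing the possibility that $B_{0kl}$ sits on the topological boundary of $\mathbb{S}_p$, whereas you handle this explicitly. Both arguments share the same minor looseness that ``absolutely continuous'' alone does not guarantee positive mass on a given open set without a full-support assumption, which you at least flag. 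Net effect: your argument is shorter and more transparent; the paper's buys nothing extra except that the indicator-approximation lemma is stated separately and could in principle be reused.
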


Refer to Appendix~\ref{Appendix} for the proof of Theorem~\ref{thm:BM}.

\section{Simulation study}\label{Section 4}

We assess the performance of the proposed BNP-DCGx under two different simulation scenarios. The first simulation demonstrates the capability of BNP-DCGx in learning the cluster-specific graph structures (i.e., the cluster view). The other simulation examines the partition averaging for DCGs that vary with covariates (i.e., the functional view) using
continuous function. Throughout the simulations and the real data analysis, we use the same set of hyperparameters: $\lambda = 10,a_\sigma = b_\sigma = 2, a_\phi = b_\phi = 1,a_\eta = b_\eta = 0.01,\nu_0 = 0.01,\omega = 100$.  We run four parallel chains with temperature values $2.5, 2, 1.5, 1$ and swap the chains every $10$ iterations.

\subsection{Cluster-Specific Graphs}

In this scenario, we consider $L_n = 3$ cluster-specific covariate-dependent DCGs with $p = 10$ nodes. The true DCGs are provided in Figure \ref{fig: Sim1} (shown as the red edges) with the edge coefficients generated from $\Delta \cdot U(0.6, 0.8) + (1-\Delta) \cdot U(-0.8, -0.6)$ where $\Delta\sim \text{Ber} (\frac 12)$.  In each cluster, we have $250$ observations, leading to a total of $750$ observations for this experiment. 
The three cluster-specific intercepts $M_1, M_2$ and $M_3$ are generated from $N_{10}(-0.2 \cdot 1_{10}, 10^{-4}\, \mathbbm{I}_{10})$, $N_{10}(0_{10}, 10^{-4}\, \mathbbm{I}_{10})$, and $N_{10}(0.2\cdot1_{10}, 10^{-4}\, \mathbbm{I}_{10})$, respectively. 
The errors $E_i|\xi_i = \ell$ are generated from $N_{10} (0, \Sigma_{\ell i})$ where $\Sigma_{li} = \text{diag}\left(0.1\tau_{i1}, \cdots, 0.1\tau_{ip}\right)$ for $\ell = 1,2,3$ with $\tau_{ij}\sim\operatorname{Exp}(1)$. 
We then let $Y_i = (\mathbbm{I}_p - B_{\ell})^{-1}(M_{\ell} + E_i)$. Finally, the covariates $X_i|\xi_i=\ell$, for $\ell=1,2,3$, are sampled from $N_2(-5 \cdot \mathrm{1}_2, \mathbbm{I}_2)$, $N_2(\mathrm{0}_2, \mathbbm{I}_2)$, and $N_2(5 \cdot \mathrm{1}_2, \mathbbm{I}_2)$, respectively. 

\begin{figure}[h!]
    \centering
        \includegraphics[width = \textwidth]{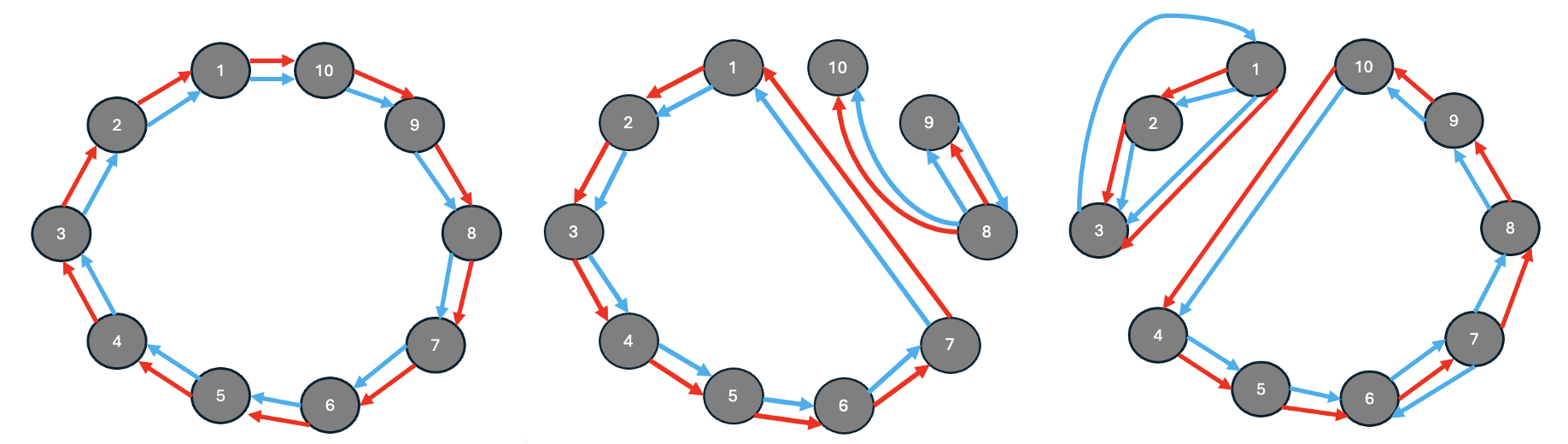}
        %\vspace{-0.25in}
	\caption{Cluster-specific true (red) DCGs and one illustration of the estimated (blue) DCGs for one particular replication.}
		\label{fig: Sim1}
\end{figure}

 We generate $1000$ MCMC samples from the proposed BNP-DCGx with burn-in of $250$ samples.  We report the estimated cluster-specific DCGs in Figure \ref{fig: Sim1}, shown as blue edges.  
 We compare our method with two alternative methods: (i) the cyclic causal discovery (CCD) algorithm \citep{richardson1996discovery} and (ii) a special case of the proposed method without covariates and clusters (called DCG hereafter). Both methods allow for the modeling of directed cycles but assume that the graph is the same across clusters. 
In Table \ref{ClGrSim1}, we report true positive rate (TPR), false discovery rate (FDR), and Matthew's correlation coefficient (MCC) for all methods calculated over 20 replicates:
 \begin{align*}
    & \text{TPR} = \frac{\text{TP}}{\text{TP} + \text{FN}},~~~~
     \text{FDR} = \frac{\text{FP}}{\text{TP} + \text{FP}},\\
     &\text{MCC} = \frac{(\text{TN}\times \text{TP}) - (\text{FN}\times \text{FP)}}{(\text{TP} + \text{FP})\times(\text{TP} + \text{FN})\times(\text{TN} + \text{FP})\times(\text{TN} + \text{FN})},
 \end{align*}
where TP = number of true positives, FP = number of false positives, TN = number of true negatives, and FN = number of false negatives. It clearly demonstrates the capability of our method in learning cluster-specific DCGs with high TPR and MCC, as well as a low FDR. The competing methods have high FDR, highlighting the importance of accounting for heterogeneity in graph learning when heterogeneity is present.

\begin{table}[ht]
\renewcommand{\arraystretch}{1.4} 
  \begin{tabular}{ |c|c|c|c|c| } 
\hline
 \textbf{Methods} & \textbf{Measures} & \textbf{Cluster} $\mathbf{1}$ & \textbf{Cluster} $\mathbf{2}$ & \textbf{Cluster} $\mathbf{3}$ \\
 \hline
 \multirow{3}{4em}{BNP-DCGx} & TPR & $0.960 (0.027)$ & $0.993 (0.007)$ & $ 1.000 (0.000)$ \\ 
  \cline{2-5}
 & FDR & $0.158 (0.050)$ & $0.148 (0.024)$ & $0.149 (0.023)$\\ 
  \cline{2-5}
 & MCC & $0.882 (0.046)$ & $0.909 (0.015)$ & $0.912 (0.013)$ \\ 
 \hline
 \multirow{3}{4em}{DCG} & TPR & $0.600 (0.093)$ & $0.659 (0.087)$ & $0.707 (0.093)$ \\ 
  \cline{2-5}
 & FDR & $0.734 (0.078)$ & $0.661 (0.090)$ & $0.634 (0.087)$\\ 
  \cline{2-5}
 & MCC & $0.148 (0.030)$ & $0.201 (0.033)$ & $0.244 (0.035)$ \\ 
 \hline
 \multirow{3}{4em}{CCD} & TPR & $0.360 (0.029)$  & $0.274 (0.032)$  &  $0.273 (0.036)$ \\ 
  \cline{2-5}
 & FDR & $0.787 (0.018)$ & $0.855 (0.017)$ & $0.837 (0.022)$ \\ 
  \cline{2-5}
 & MCC & $0.169 (0.027)$ & $0.087 (0.027)$ & $0.093 (0.033)$ \\ 
 \hline
\end{tabular}
\centering
\caption{Average TPR, FDR and MCC along with 
standard errors (in parentheses) for BNP-DCGx, DCG and CCD for all three clusters.}
\label{ClGrSim1}
\end{table}

\subsection{Continuous Graph Functions}

In this scenario, we consider a fixed graph (i.e., one cluster) with continuously varying $B(X)$. For ease of demonstration, we consider $p=3$ and $q=2$. The sample size is set to $n=800$. The true graph is a cycle $1\to2\to3\to1$. The nonzero entries of the true $B(X)$ is a function of covariates $X = (X_1, X_2)$ specified as follows:
$$ f(z) = \frac{e^{3z} - e^{3(1-z)}}{e^{3z} + e^{3(1-z)}} + 0.1 \hspace{5pt} \text{with} \hspace{5pt}  B_{21}(x) = f(x^1_*) , B_{13}(x) =  f(x^2_*) \hspace{5pt} \text{and} \hspace{5pt} B_{32}(x) = f(x^3_*),$$
where $x^1_* = \sqrt{x_1 x_2}, x^2_* = \sqrt{\frac{x^2_1 + x^2_2}{2}}$, and $x^3_* = \frac{x_1 + x_2}{2}$. The rest of the entries of the matrix $B(X)$ is zero. We set $M(X) = (1, 1, 1)^\top$. We generate the noise terms from normal, $E_{ij} \sim N(0, \sigma\tau_{ij})$, with $\sigma 0.1$ and $\tau_{ij}$ drawn from $Exp(1)$ for $ j = 1, 2, 3$. Subsequently, we set $Y_i = (\mathbbm{I}_3 - B(X_i))^{-1}(M + E_{i})$. We uniformly sample the covariates $X_{i1}, X_{i2} \overset{i.i.d.}{\sim} U(0,1)$. Like before, we generate $1000$ MCMC samples with burn-in of $250$ samples.
\begin{figure}[h!]
    \centering
        \includegraphics[width = \textwidth]
        {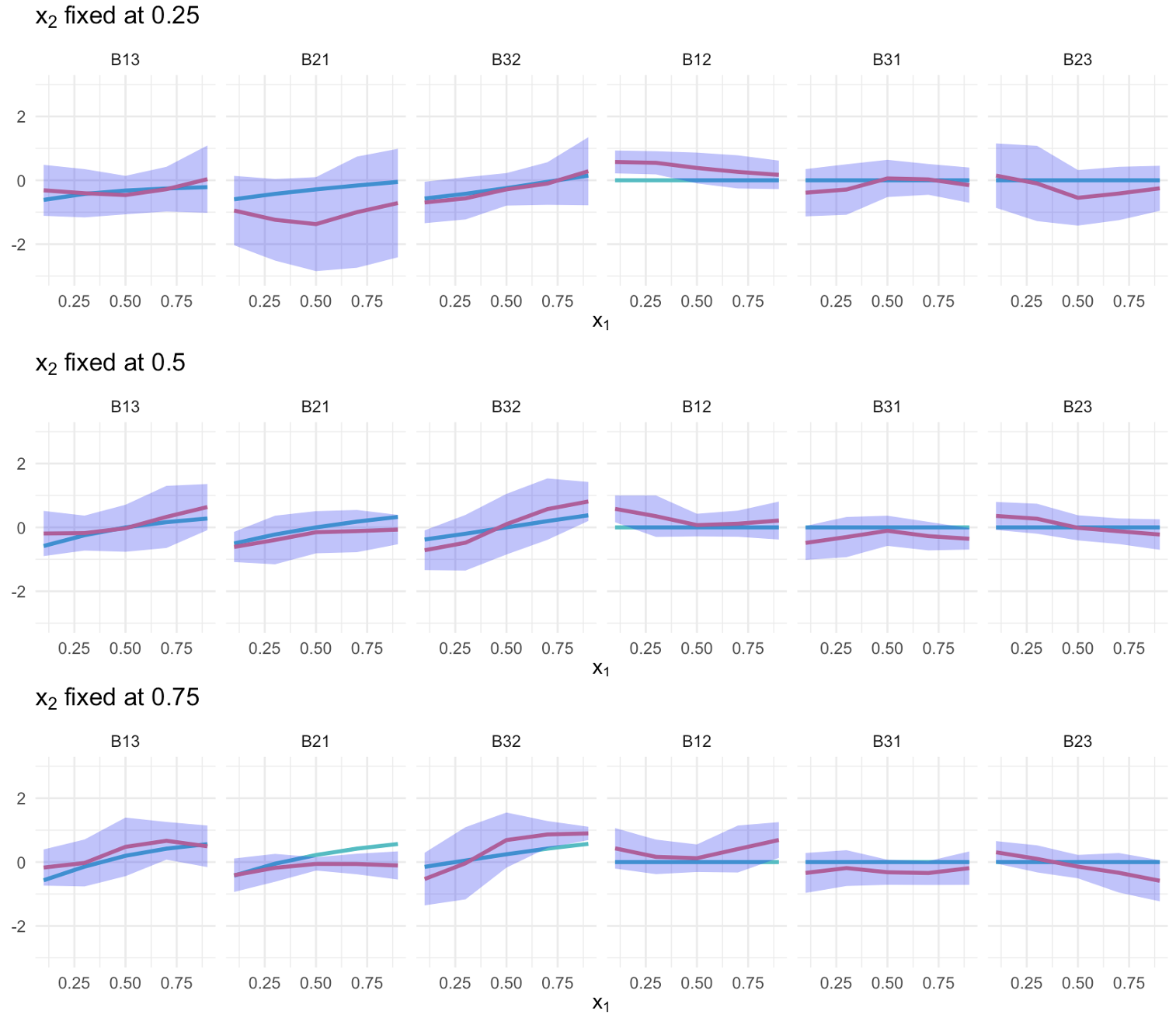}
	\caption{True values of $B(X)$ are represented by the blue curves, the posterior means by the red curves, and the posterior predictive intervals by the shades, for varying $X_1$ and fixed $X_2$.}
		\label{fig: ContGraphVaryX1}
\end{figure}
\begin{figure}[h!]
    \centering
        \includegraphics[width = \textwidth]{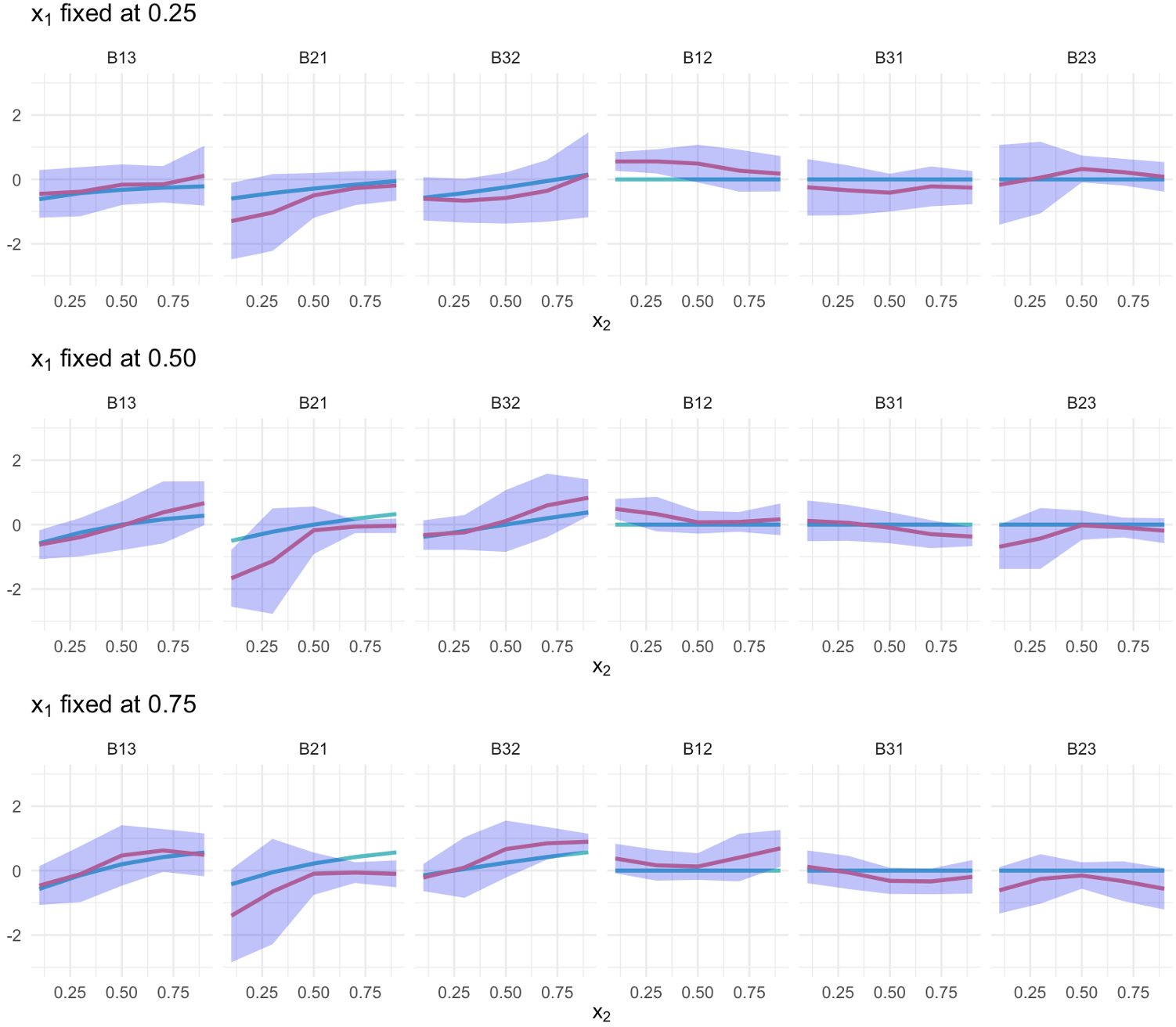}
	\caption{True values of $B(X)$ are represented by the blue curves, the posterior means by the red curves, and the posterior predictive intervals by the shades, for varying $X_2$ and fixed $X_1$.}
		\label{fig: ContGraphVaryX2}
\end{figure}

Figure~\ref{fig: ContGraphVaryX1} (\ref{fig: ContGraphVaryX2}) shows the posterior mean of $B(X)$, along with its predictive intervals (mean $\pm$ 2 standard deviations), as a function of $X_1$ ($X_2$) with $X_2$ ($X_1$) fixed at $0.25, 0.5$, and $0.75$. 
Generally, the true value of $B(X)$ is close to the posterior mean of $B(X)$  and is mostly contained within the limits of the predictive intervals. 

\section{Real Data Analysis}\label{Section 5}

We analyzed the motivating spatial transcriptomics data collected from the six-layered human dorsolateral prefrontal cortex \citep{maynard2021transcriptome} to infer svGRNs. The metadata indicate seven different broad cell types. We arranged our analysis into four tissue segments: excitatory neurons (n=1774), other neurons (n=650), oligodendrocytes and oligodendrocyte-like cells (oligodendrocytes hereafter for short, n=991), and astrocytes (n=1180). 
Excitatory neurons and other neurons exhibit double-band structures (see the left panels of Figures~\ref{fig: union_type3} and \ref{fig: union_type2}),  oligodendrocytes consist of several similar cell subtypes that are spatially adjacent (Figure \ref{fig: union_type4to7}, left panel), and astrocytes form a single band structure (Figure \ref{fig: union_type1}, left panel). 
For objectivity, we selected eight genes with the highest correlations, which is necessary for graph learning, for each tissue segment. These selected genes turn out to be biologically relevant as they are significantly enriched for brain functions (Figure \ref{fig:GO}). We generated $1000$ posterior samples with $250$ burn-in from the proposed BNP-DCGx using the parallel tempered MCMC.

\begin{figure}[h]
    \centering
    \includegraphics[height = 8.5cm,width=\linewidth]
    {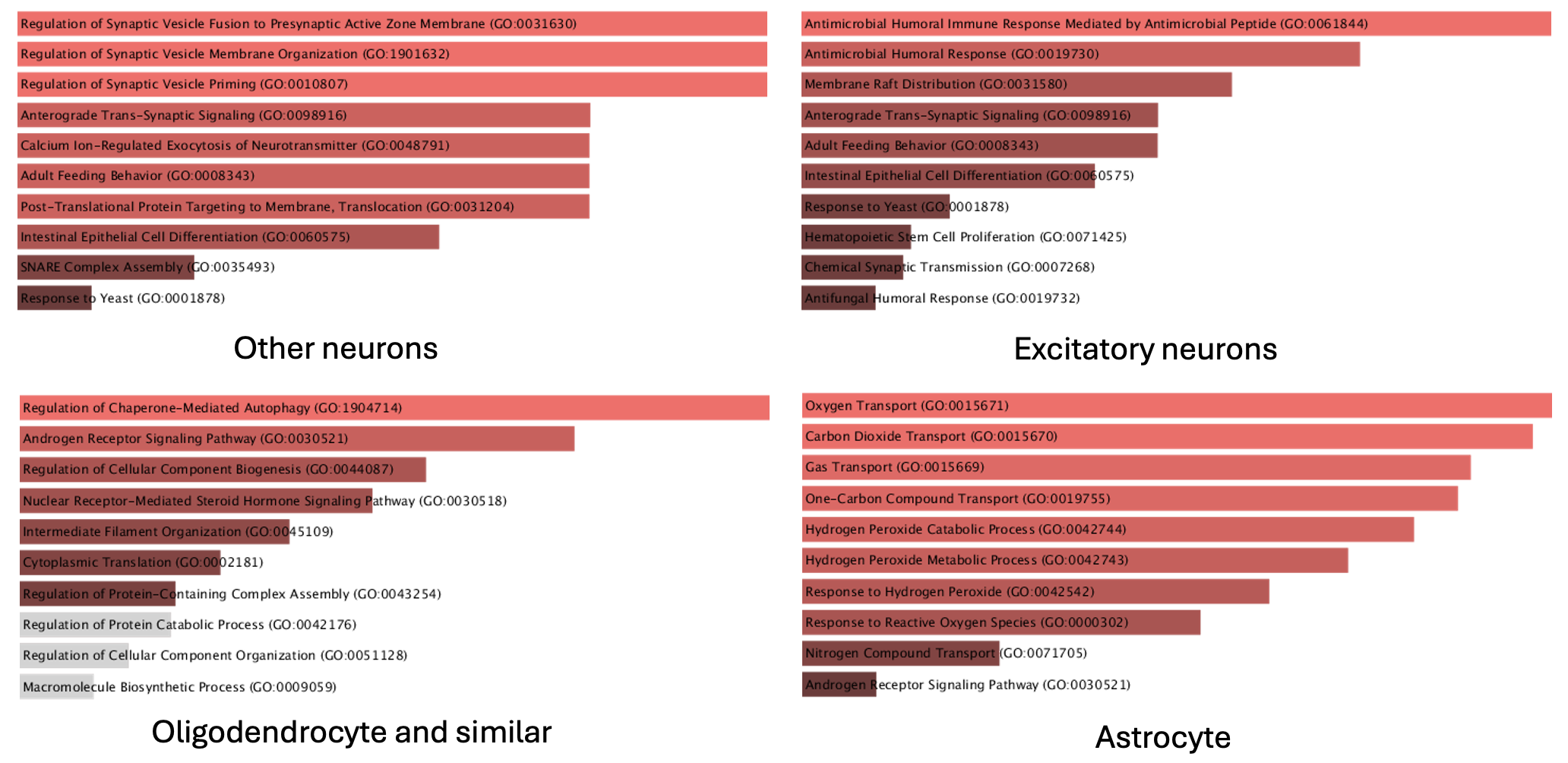}
    \caption{Gene Ontology enrichment analysis of the selected eight genes for each case using Enrichr \citep{kuleshov2016enrichr}. }
    \label{fig:GO}
\end{figure}

\begin{figure}[h!]
    \centering
        \includegraphics[width = 14cm, height = 5cm]{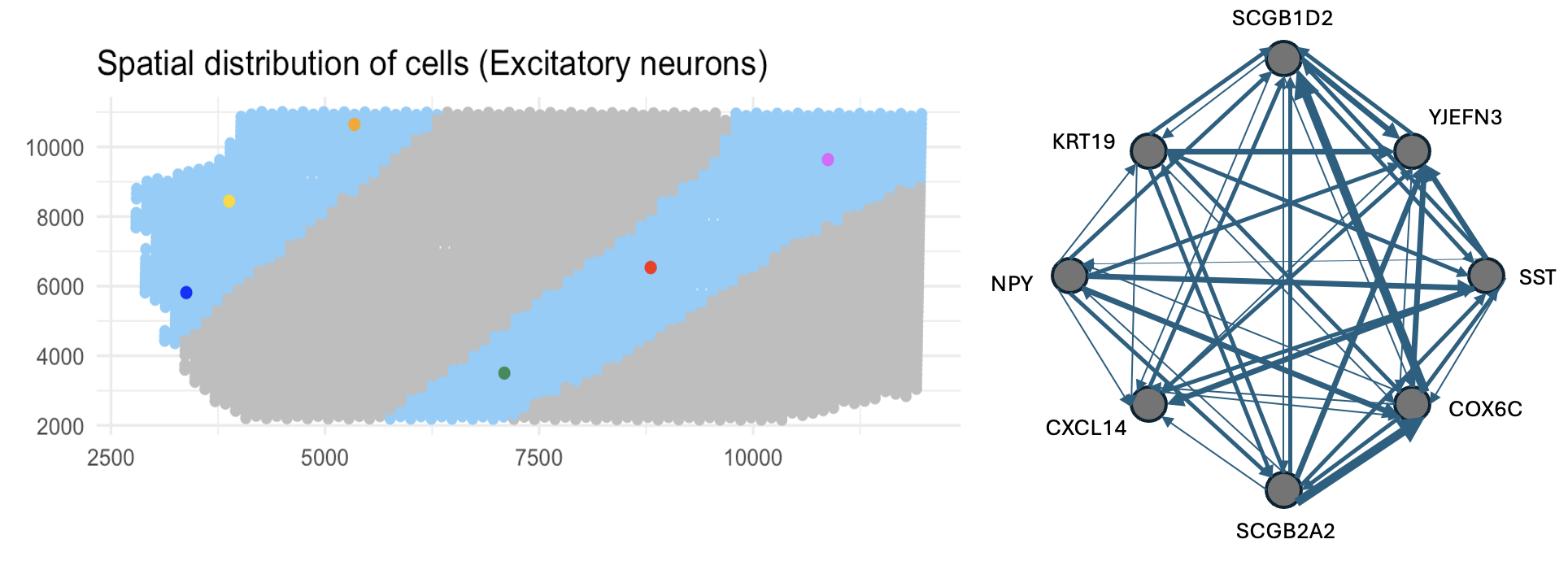}
	\caption{\textit{Left panel:} Spatial distribution of excitatory neurons (in light blue). Six individual spots are highlighted, for which we will show their individual graphs. \textit{Right panel:}  Union graph showing all possible edges at all spot locations. }
		\label{fig: union_type3}
\end{figure}

  \begin{figure}[h!]
    \centering
        \includegraphics[width = 14cm, height = 6cm]
        {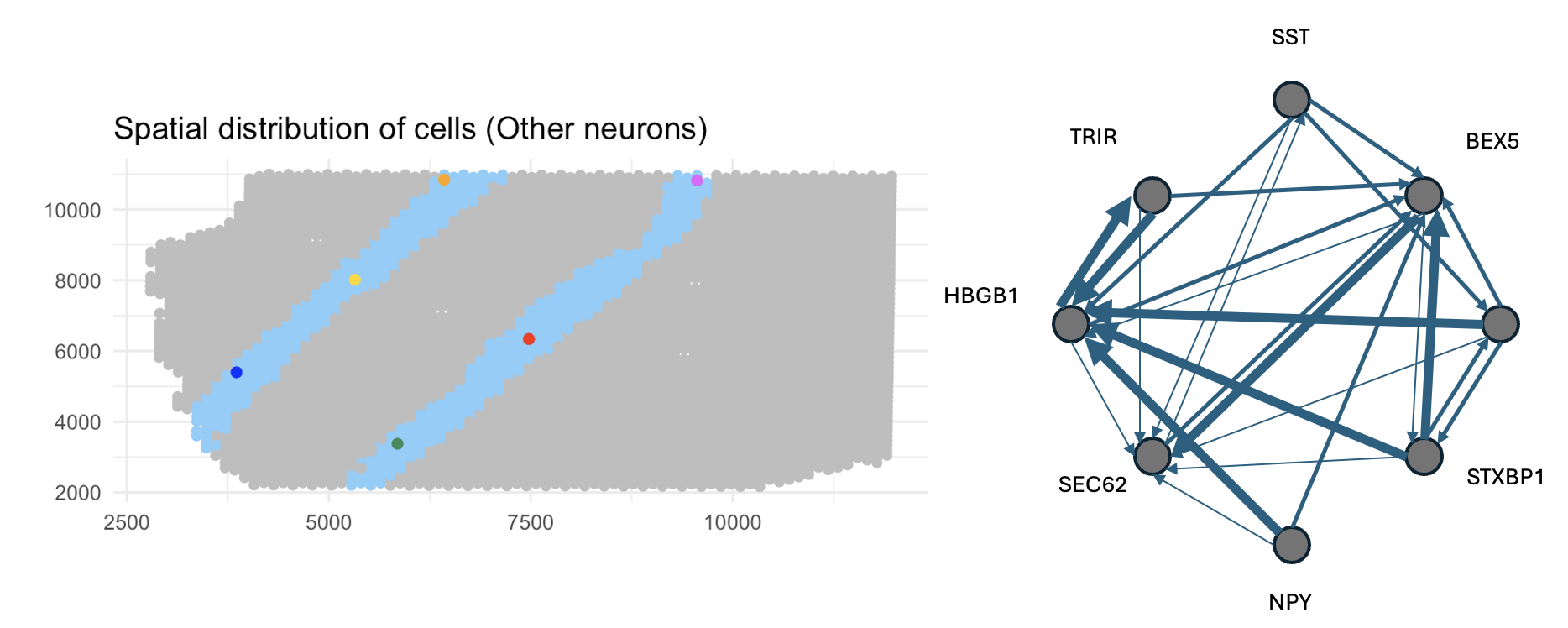}
	\caption{\textit{Left panel:} Spatial distribution of other neurons (in light blue). Six individual spots are highlighted, for which we will show their individual graphs. \textit{Right panel:}  Union graph showing all possible edges at all spot locations. }
		\label{fig: union_type2}
\end{figure}
\begin{figure}[h!]
    \centering
        \includegraphics[width = 14cm, height = 5cm]{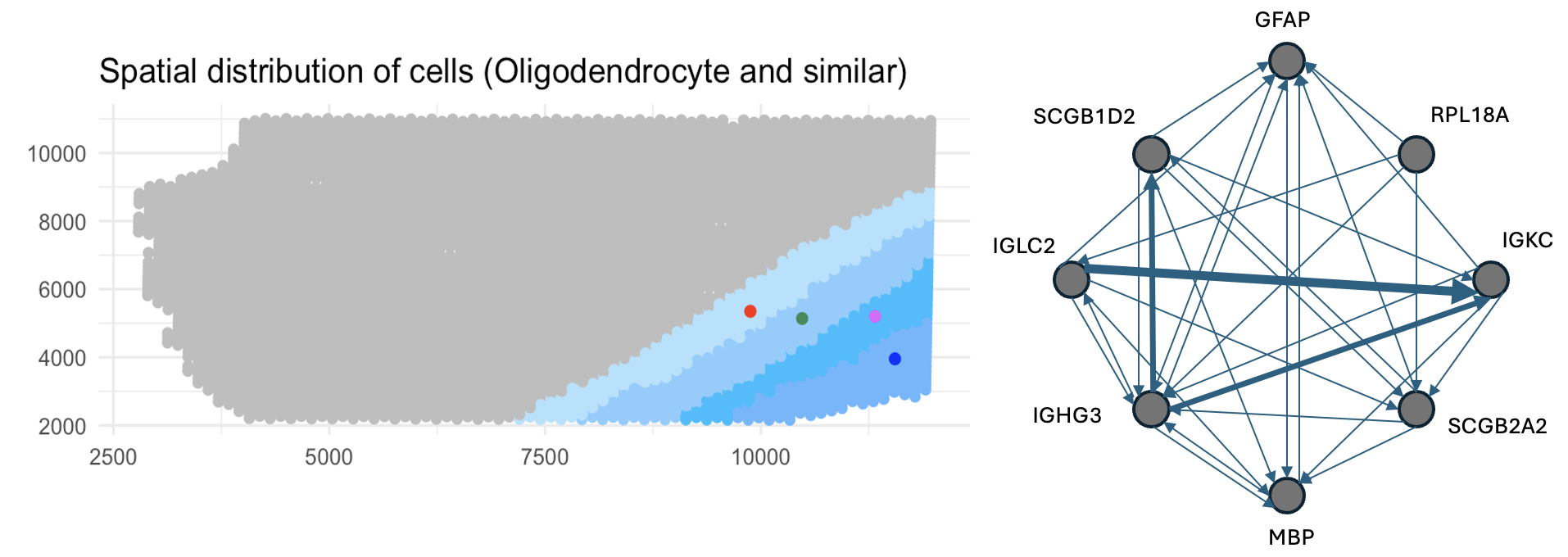}
	\caption{\textit{Left panel:} Spatial distribution of oligodendrocytes (in blue). Four individual spots are highlighted, for which we will show their individual graphs. \textit{Right panel:}  Union graph showing all possible edges at all spot locations. }
		\label{fig: union_type4to7}
\end{figure}
  
    \begin{figure}[h!]
    \centering
        \includegraphics[width = 13cm, height = 5cm]{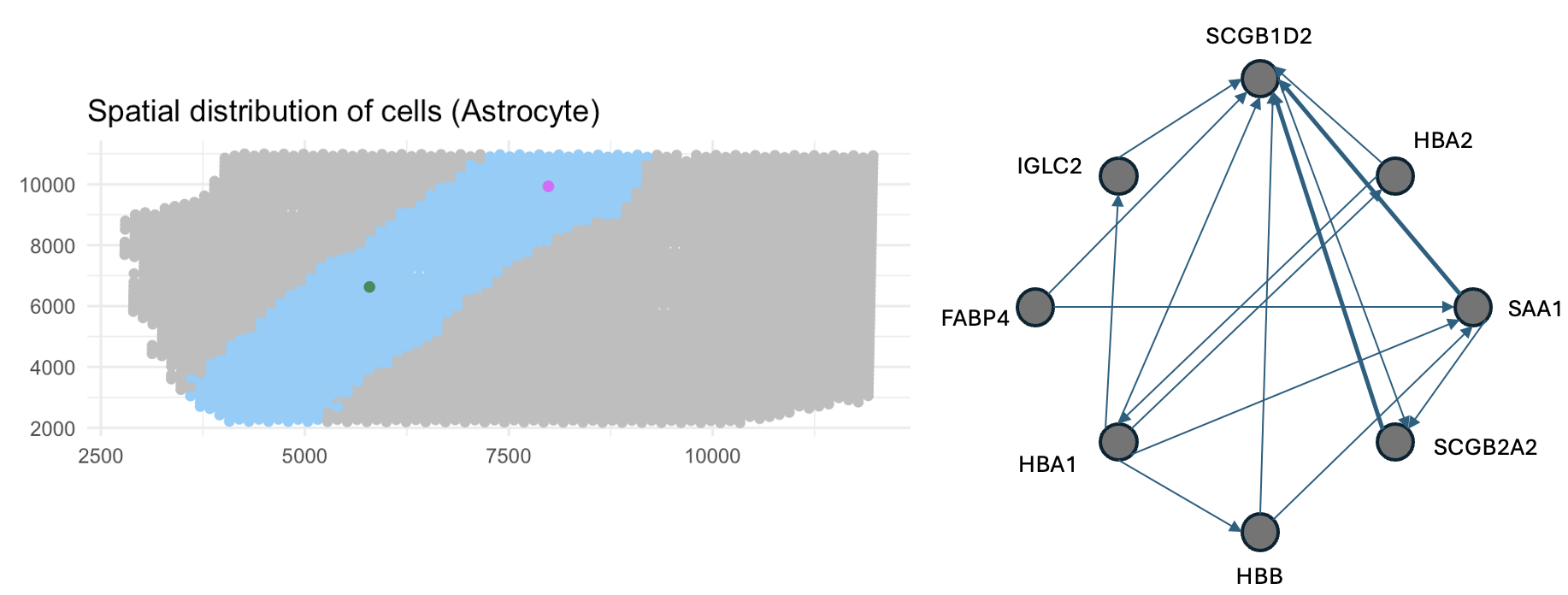}
     
	\caption{\textit{Left panel:} Spatial distribution of astrocytes (in light blue). Two individual spots are highlighted, for which we will show their individual graphs. \textit{Right panel:}  Union graph showing all possible edges at all spot locations. }
		\label{fig: union_type1}
\end{figure}

We depict the union of all the spot-level svGRNs as a cell-population-level summary for the excitatory neurons, other neurons, oligodendrocytes, and astrocytes in the right panels of Figures~\ref{fig: union_type3}, \ref{fig: union_type2}, \ref{fig: union_type4to7}, and \ref{fig: union_type1}, respectively. The edge width reflects the frequency of occurrence of the edges across all spot-level individual svGRNs. 
Highly connected hub genes across spot-level svGRNs tend to play crucial roles in gene
regulation as they are involved in many regulatory activities for many cells. 
In the excitatory neurons, \emph{SST}, a neuropeptide, plays a central role mostly in brain functions and cognitive behavior. A reduction in the expression of \emph{SST} could be an important indication of neurological disorders such as depression and Alzheimer's disease   \citep{song2021role}. \cite{alfimova2021relationship} suggests that the methylated \emph{YJEFN3} is involved with cognitive deficits such as schizophrenia spectrum disorders. 
 \emph{CXCL14} contributes towards a wide variety of biological processes such as neural development and immune process \citep{westrich2020multifarious, zhou2023role}.
In other neurons, \emph{BEX5} is a hub and is one of the members of the family of BEX proteins in humans, which is involved in signaling pathways and transcriptional regulations for neurodegenration and cell growth \citep{fernandez2015brain}. \emph{HMGB1} is responsible for a wide variety of mechanisms mostly related to aging and age-related diseases, inflammatory diseases, and immunity \citep{martinotti2015emerging, yuan2020high, guan2023comprehensive, ruggieri2024hmgb1}. 
    In Oligodendrocytes, 
    \emph{GFAP} is a biomarker for neurological disorders and could cause Alexander disease through mutation \citep{ quinlan2007gfap, yang2015glial}.

We select some spots, whose spatial locations are highlighted in the left panels of Figures \ref{fig: union_type3}-\ref{fig: union_type1}, for the visualization of spot-level svGRNs in Figures~\ref{fig: diffBand_type3}-\ref{fig: sameBand_Type1}. 
The edge width represents the posterior inclusion probability of the edge.
Many of them contain at least one feedback loop, which cannot be detected with existing DAG learning algorithms. For the neurons (Figures \ref{fig: diffBand_type3}-\ref{fig: sameBand_Type2}), we notice an interesting similarity in the structure of svGRNs within each band-like structure and the difference between the two bands. For example, the two svGRNs displayed in Figure \ref{fig: diffBand_type2}, one from each band, reveal notable differences in both edge inclusion and edge strengths. Conversely, the svGRNs corresponding to spots within the same band, e.g., the pair in Figure \ref{fig: otherNeuronBand1} and the pair in Figure \ref{fig: otherNeuronBand2} exhibit structural similarity with small variations in edge strengths. The larger dissimilarity between the bands indicates regulatory heterogeneity for cells of the same type, which may be due to microenvironmental factors.

\begin{figure}[h!]
    \centering
        \includegraphics[width = 12cm, height = 5cm]{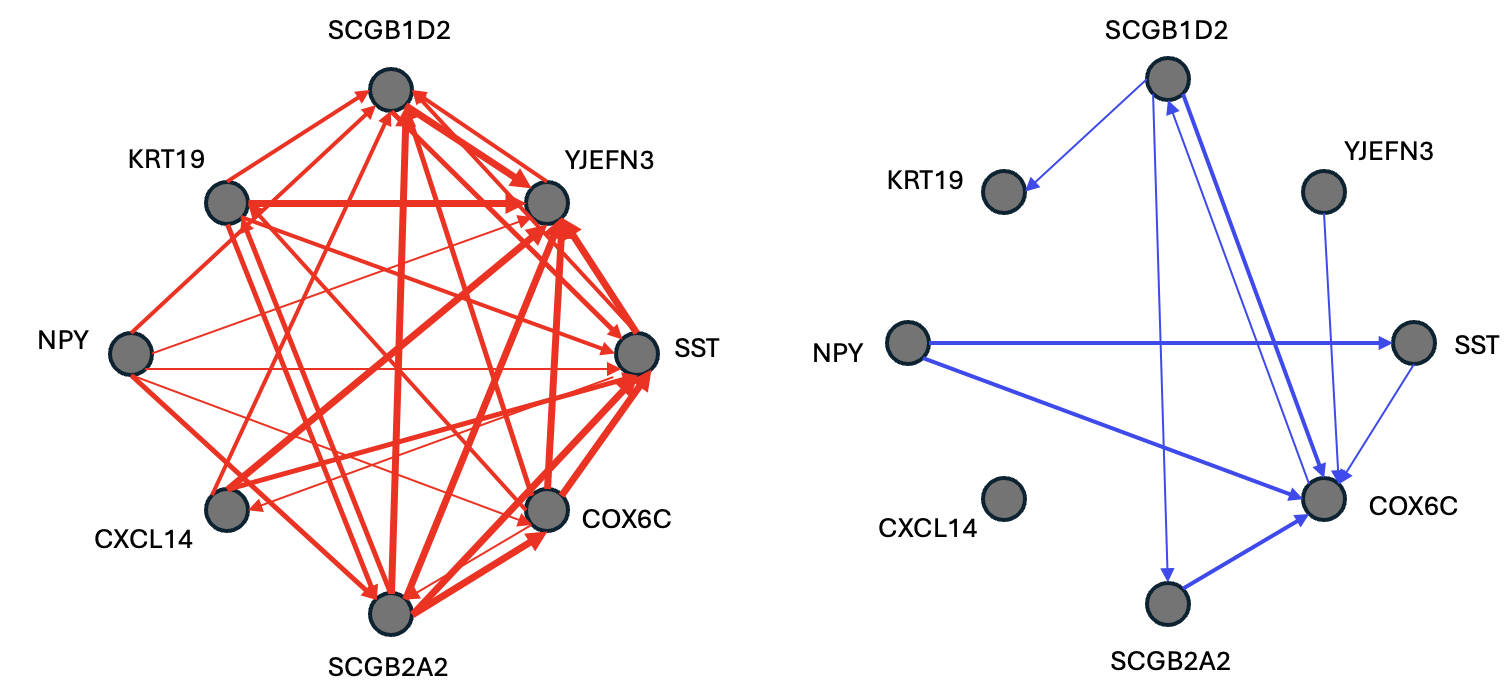}
        %\vspace{-0.25in}
	\caption{svGRNs for two different spots from two different bands for the excitatory neurons. The locations of the spots are depicted in Figure~\ref{fig: union_type3}.}
		\label{fig: diffBand_type3}
\end{figure}

\begin{figure}[h!]
  \centering
  
  \begin{subfigure}{0.8\textwidth}
    \centering
      \includegraphics[width = 12cm, height = 5cm]{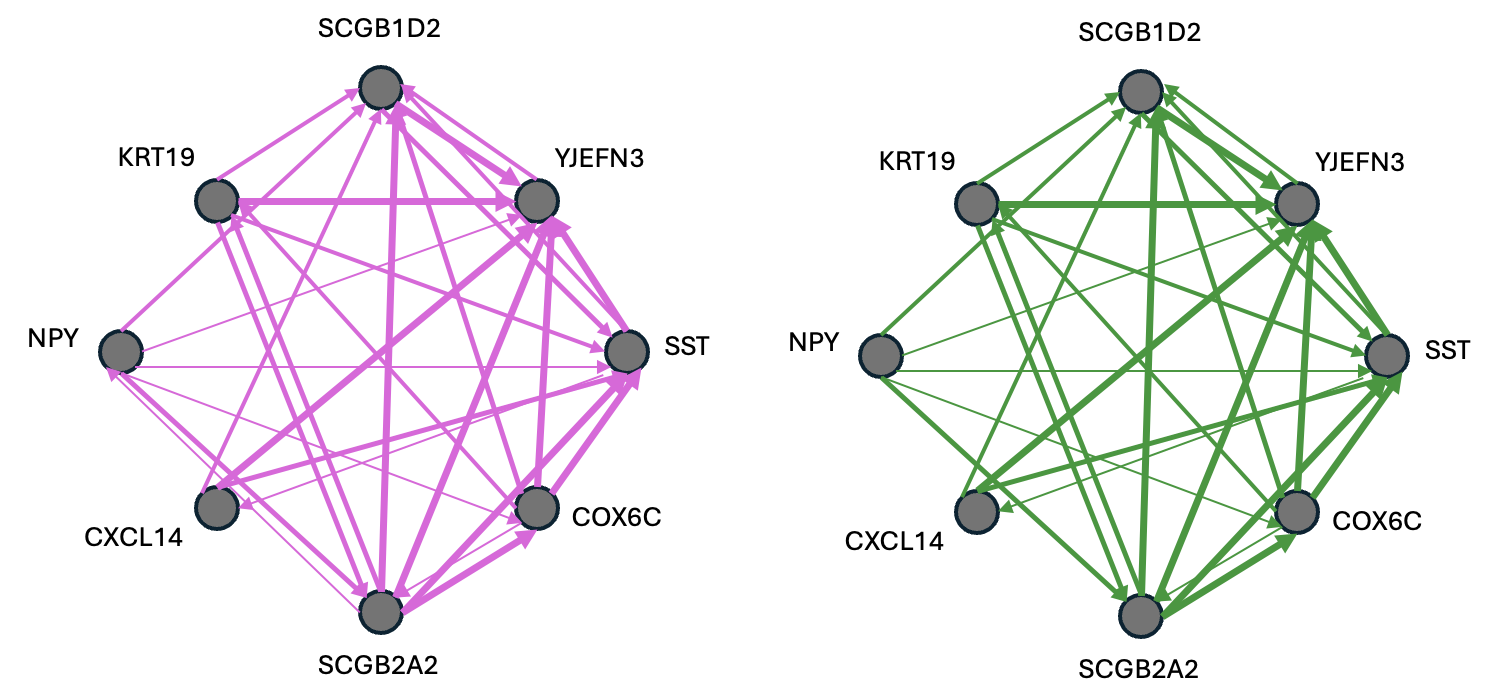}
    \caption{}
    \label{fig: excitatoryNeuronBand1}
  \end{subfigure}
  
  \begin{subfigure}{0.8\textwidth}
    \centering
     \includegraphics[width = 12cm, height = 5cm]{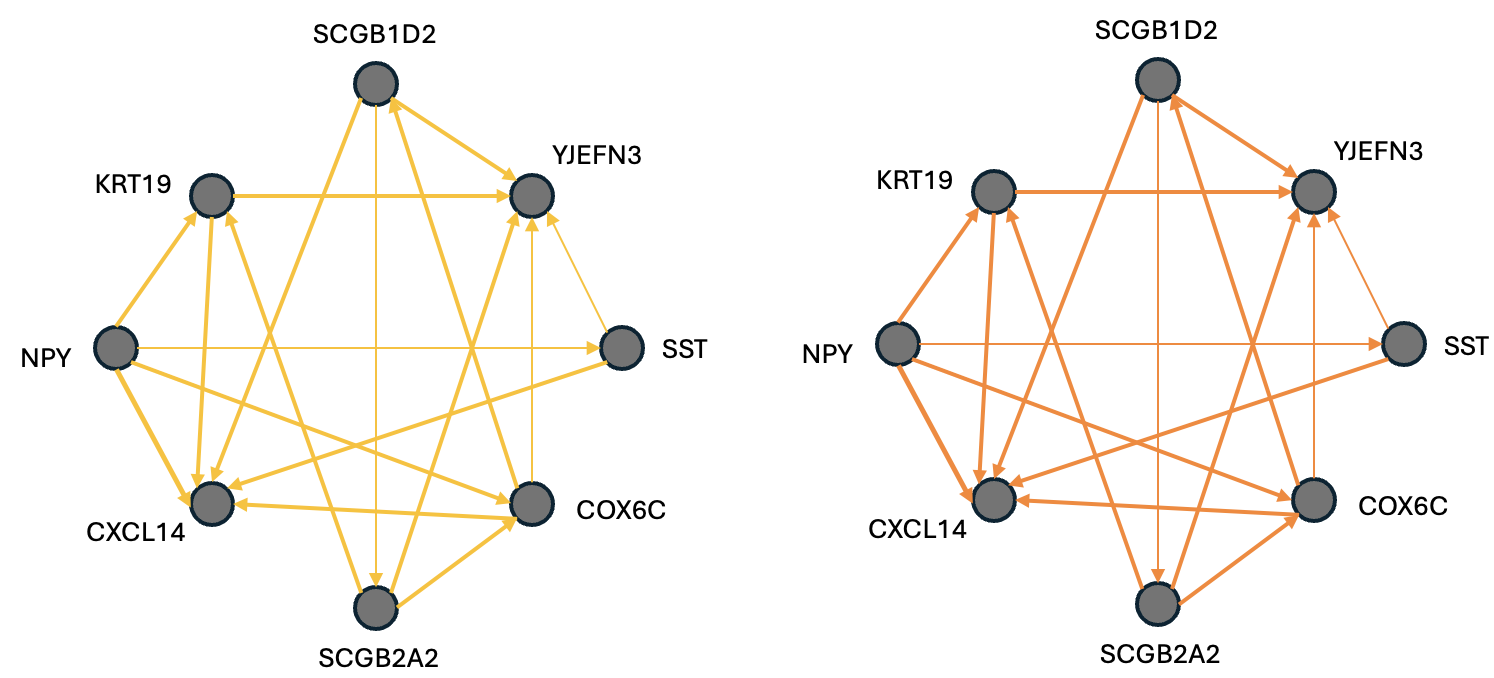}
    \caption{}
    \label{fig: excitatoryNeuronBand2}
  \end{subfigure}
  
  \caption{svGRNs for different spots from the excitatory neurons. The svGRN pairs from the upper and lower panel correspond to spot locations from the same band as plotted in Figure~\ref{fig: union_type3} (left panel).}
  \label{fig: sameBand_Type3}
\end{figure}

\begin{figure}[h!]
    \centering
        \includegraphics[width = 12cm, height = 5cm]{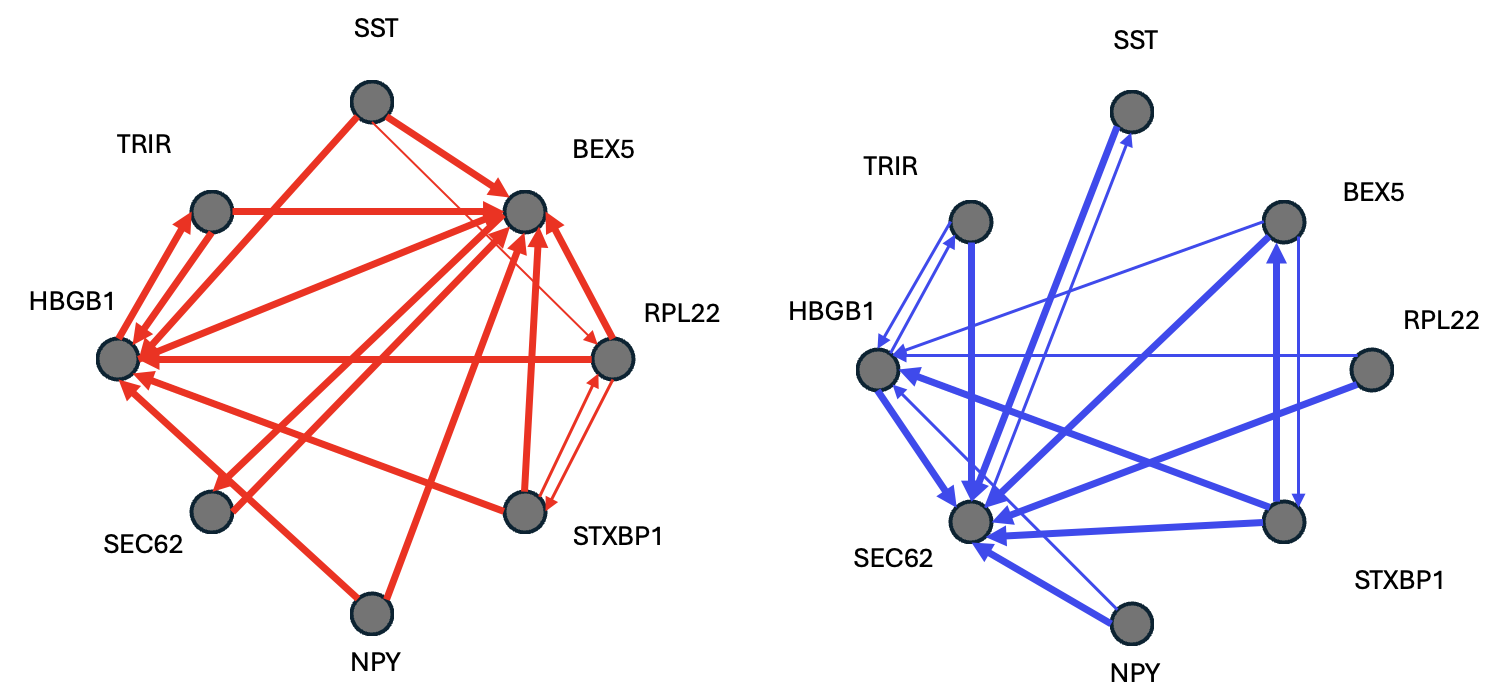}
	\caption{svGRNs for two different spots from two different bands for the other neurons. The locations of the spots are depicted in Figure~\ref{fig: union_type2}.}
		\label{fig: diffBand_type2}
\end{figure}

\begin{figure}[h!]
  \centering
  
  \begin{subfigure}{0.8\textwidth}
    \centering
      \includegraphics[width = 12cm, height = 5cm]{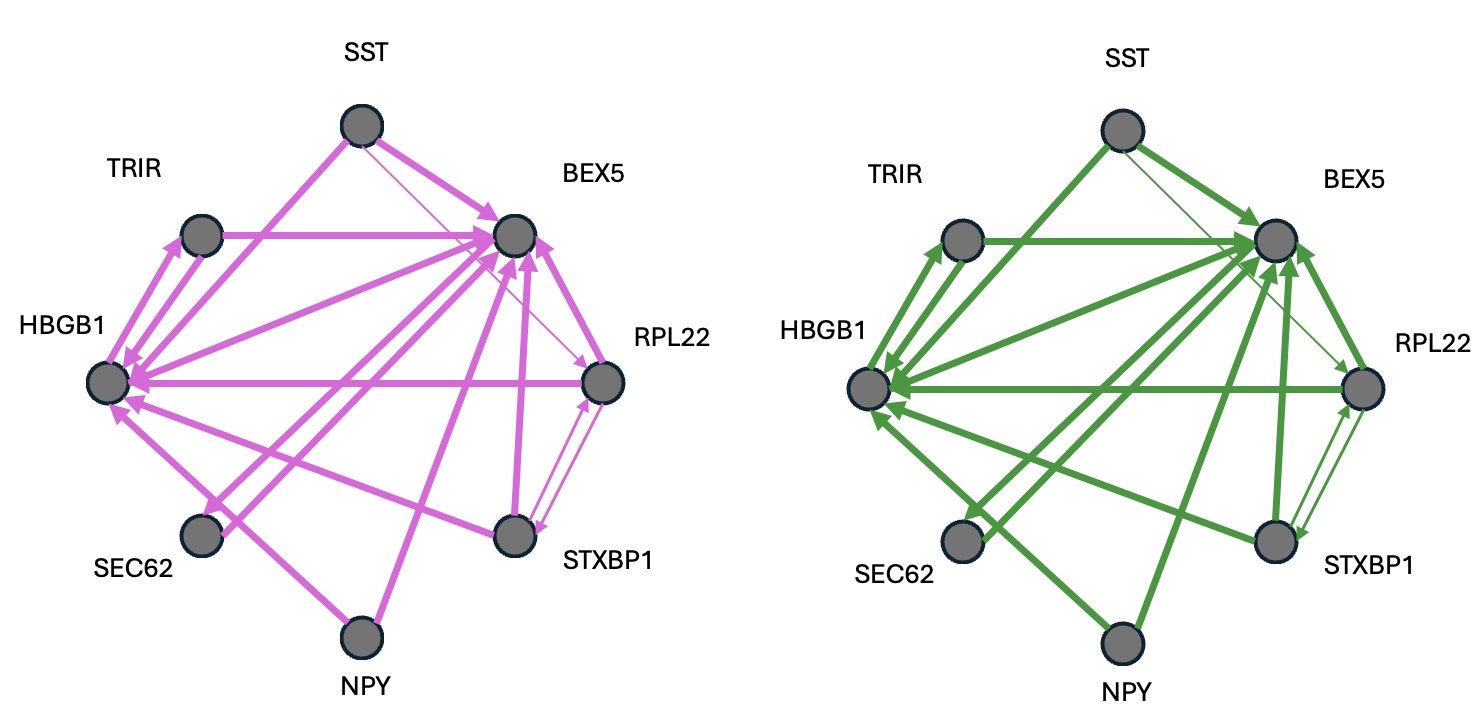}
      \caption{}
    \label{fig: otherNeuronBand1}
  \end{subfigure}
  
  \begin{subfigure}{0.8\textwidth}
    \centering
     \includegraphics[width = 12cm, height = 5cm]{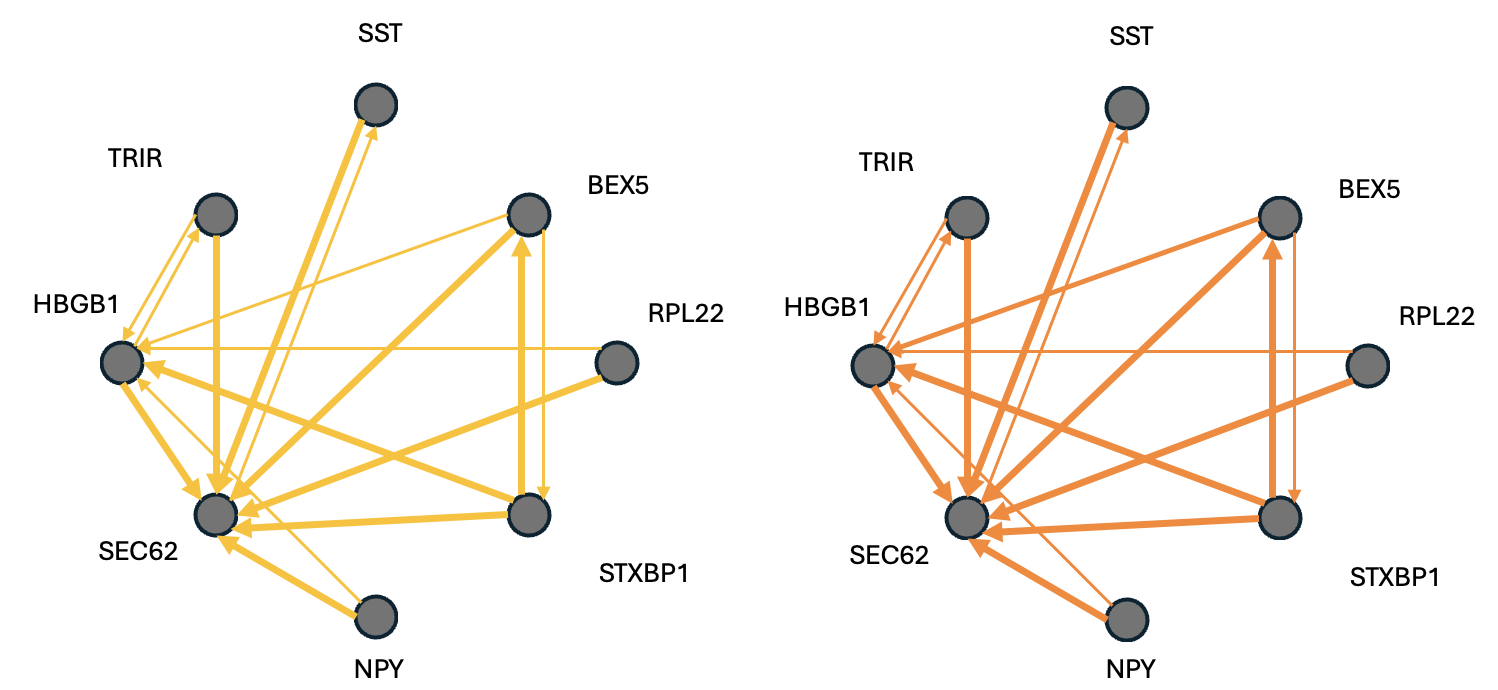}
    \caption{}
    \label{fig: otherNeuronBand2}
  \end{subfigure}
  
  \caption{svGRNs for different spots from the other neurons. The svGRN pairs from the upper and lower panel correspond to spot locations fromthe  same band as plotted in Figure~\ref{fig: union_type2} (left panel).}
  \label{fig: sameBand_Type2}
\end{figure}

\begin{figure}[h!]
  \centering
  
  \begin{subfigure}{0.8\textwidth}
    \centering
      \includegraphics[width = 12cm, height = 5cm]{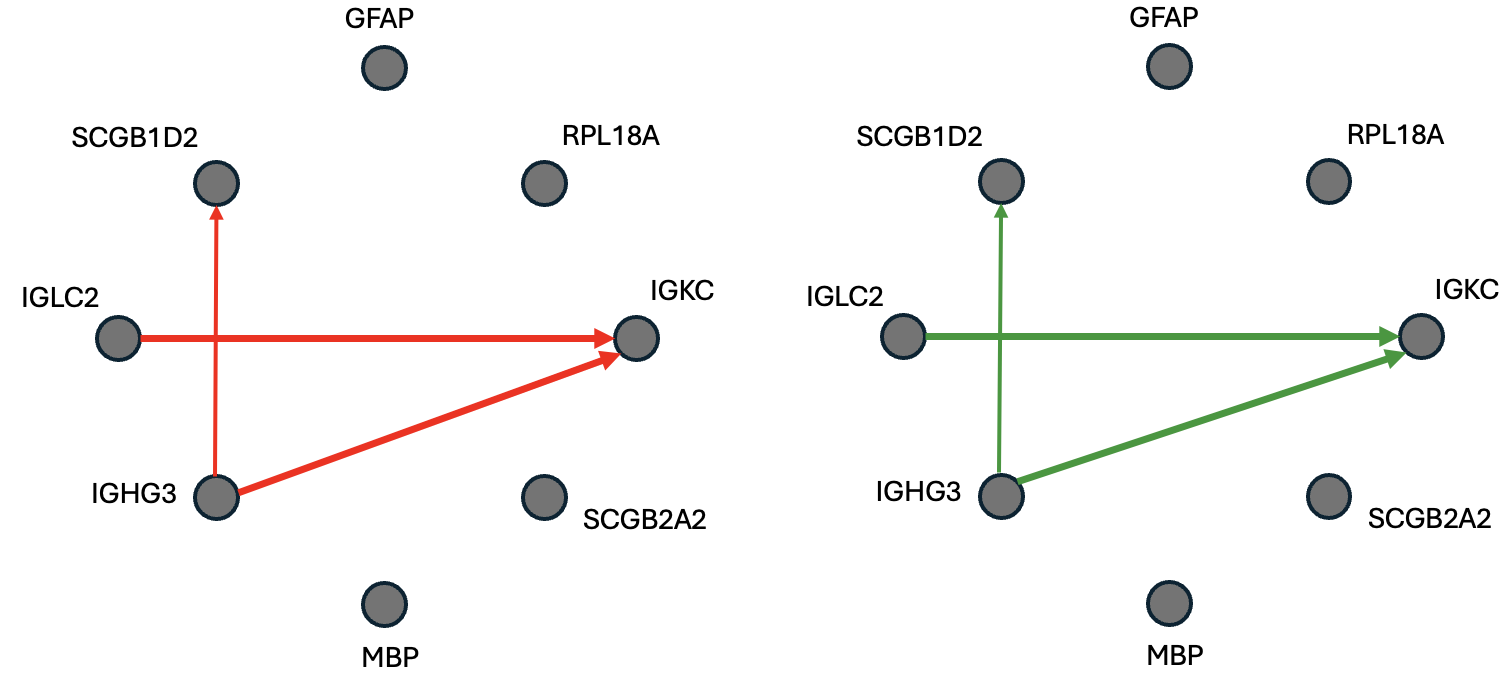}
    \label{fig: OligoBand12}
  \end{subfigure}
  
  \begin{subfigure}{0.8\textwidth}
    \centering
     \includegraphics[width = 12cm, height = 5cm]{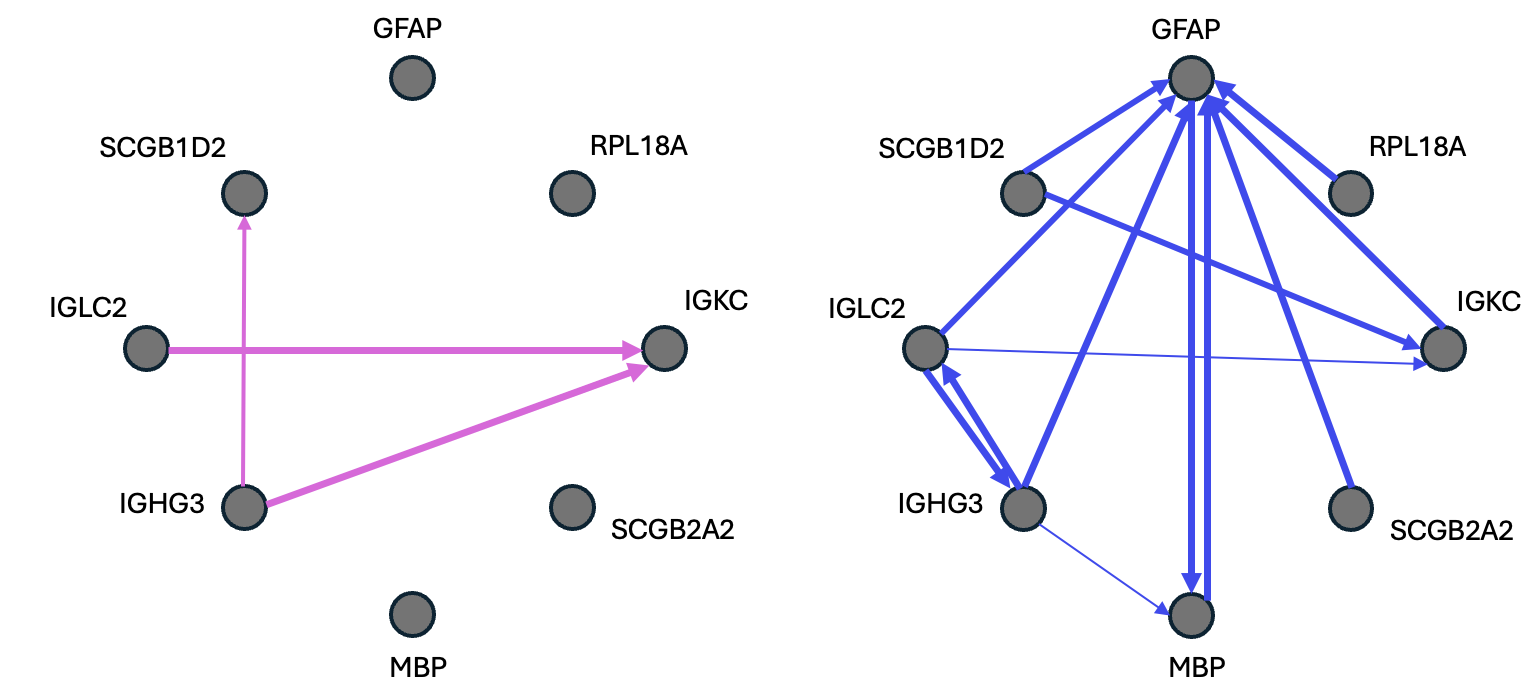}
    \label{fig: OligoBand34}
  \end{subfigure}
  
  \caption{svGRNs for different spots from the oligodendrocytes. The svGRNs are selected corresponding to the spots from the consecutive bands as shown in  Figure~\ref{fig: union_type4to7} (left panel).}
  \label{fig: diffBand_Type4to7}
\end{figure}

\begin{figure}[h!]
    \centering
        \includegraphics[width = 12cm, height = 5.5cm]{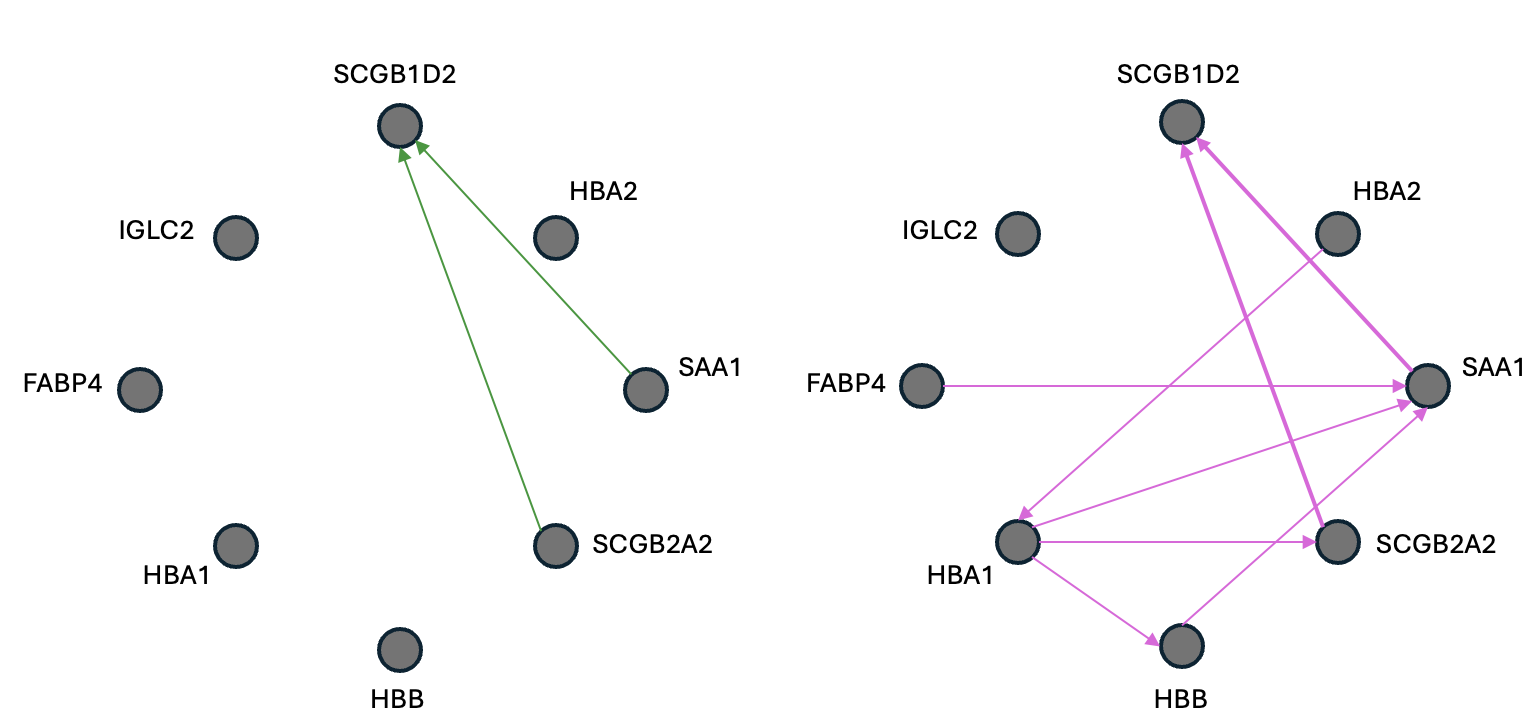}
	\caption{svGRNs for different spots from the astrocytes. The svGRNs are selected corresponding to the two spots as shown in  Figure~\ref{fig: union_type1} (left panel).}
		\label{fig: sameBand_Type1}
\end{figure}

Such spatial regulatory heterogeneity cannot be captured by conventional spatial transcriptomic clustering algorithms. For example, we applied the Dimension\mbox{-}Reduction Spatial\mbox{-}Clustering (DR-SC) \citep{liu2022joint} designed for spatial transcriptomics data as well as the traditional k-means algorithm with $k = 2$. 
The left panel of Figure \ref{fig: SpaCl} presents the k-means clustering based solely on gene expression data, while the middle panel shows the clustering obtained using a combined and scaled version of both gene expression and spatial information. Clearly, k-means failed to detect the two bands as two separate clusters. In the former case, the two clusters are mixed within each band with no clear separation. 
In the latter case, while some localization of the two clusters is observed, the clusters do not correspond to the two spatial bands. Instead, each band contains roughly two clusters with some interspersed points in the middle of the bands. The right panel of Figure~\ref{fig: SpaCl} shows that only one cluster was found by DR-SC. Failure to detect the distinct two bands could be explained by the unimodal marginal distributions of the gene expression (results not shown).
The proposed method could successfully detect the two spatial bands, which could be novel neuron subtypes, by leveraging the underlying differences in their GRNs in addition to the gene expression differences.
\begin{figure}[h!]
    \centering
        \includegraphics[width = \textwidth]{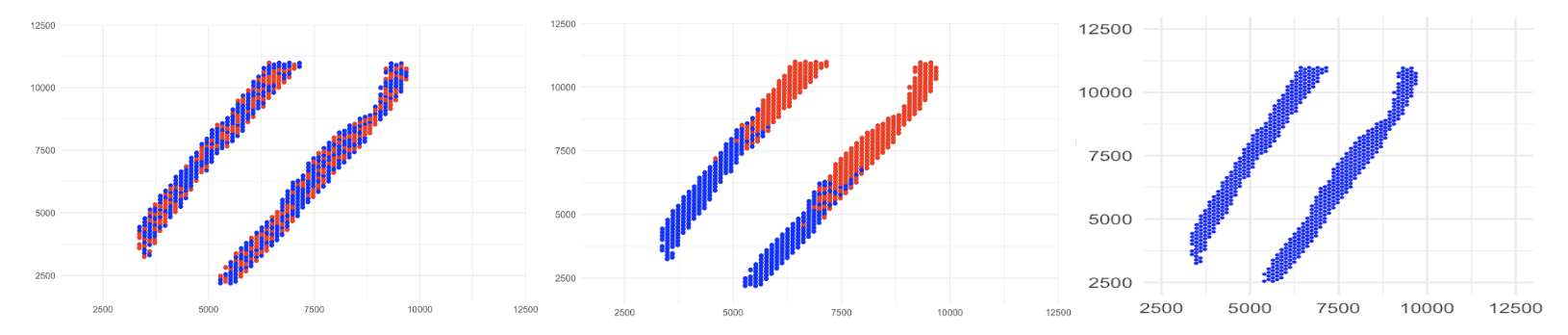}
	\caption{\textit{Left panel:} k-means clustering applied to gene expression data. \textit{Middle panel:} k-means applied to combined and scaled gene expression and spatial information. \textit{Right panel:} DR-SC clustering applied to the combined gene expression and spatial location data.}
		\label{fig: SpaCl}
\end{figure}
The identification of cell subtypes is biologically plausible. As stated by \cite{maynard2021transcriptome}, the human dorsolateral prefrontal cortex can be characterized by its laminar organization, that is, by the arrangement of cortical neurons into six distinct layers. The cell type and spots within different cortical layers can exhibit distinct gene expression profiles and varying connectivity patterns of genes. Moreover, advancements in single-cell transcriptomics and the continuous refinement of cell type classifications have uncovered the existence of remarkable diversity in cortical cell subtypes. Refer to \cite{miller2019shared} for details on cellular diversity and subtypes in the human cerebral cortex. This diversity arises during development where specialized groups of immature cells, known as progenitor pools, give rise to a variety of cell types and subtypes. Consequently, within a single cell type, there may exist distinct cellular microenvironments or multiple cell subtypes, potentially leading to differences in gene regulatory relationships.

In summary, the proposed method detected svGRNs with feedback loops based on spatial transcriptomic data using spatial locations as covariates. Moreover, the inferred graph structures reveal distinct clusters corresponding to separated spatial bands, allowing the identification of plausible cell subtypes driven by distinct GRNs. While GRNs within the same band may exhibit structural similarities, the variation in edge strength highlights their spatial dynamics. This suggests the potential influence of the cellular microenvironment on GRNs, emphasizing the role of spatially varying connectivity within the tissue context.

\section{Discussion}\label{Section 6}

We have proposed a novel Bayesian nonparametric directed cyclic graph model, BNP-DCGx, which leverages covariate information in heterogeneous data through a covariate-dependent random partition model. The key feature is the use of an intermediary partition layer between covariates and graph structures, which enables us to satisfy the stability condition, an eigenvalue constraint on the SEM coefficient matrix, while allowing graph structures and edge strengths to vary continuously with covariates. We developed a parallel-tempered MCMC algorithm and ensured smoothly varying graphs through partition averaging from a set of discrete graphs originated from the random partition. The efficacy of the proposed method was demonstrated through simulations and an application to spatial transcriptomics data from human dorsolateral prefrontal cortex. In the real data analysis, the method identified graph structures with feedback loops that vary across spatial locations within a single cell type. Notably, the inferred regulatory networks revealed distinct clusters corresponding to separated spatial bands, identifying plausible cell subtypes driven by distinct gene regulatory mechanisms. These regulatory subtypes were not detected by conventional clustering methods based on gene expression alone, highlighting the value of network-level analysis for uncovering hidden heterogeneity. The variation in edge strengths within spatial bands further suggests the influence of local microenvironmental factors on gene regulation, emphasizing the role of spatially varying connectivity within tissue context.

Future research directions include extending the model to accommodate confounders, exploring non-linear relationships beyond the current linear structural equation framework, and investigating alternative approaches for constructing covariate-dependent cyclic graphs. The computational efficiency could be improved through variational Bayes approximations for large graphs or datasets. The identification of regulatory subtypes based on network differences opens new avenues for precision medicine, where treatments could be tailored to an individual's regulatory profile rather than expression profile alone. As spatial multi-omics technologies continue to advance, methods that integrate spatial information with regulatory network inference will become increasingly valuable for understanding the complex organization of biological systems.
Beyond spatial transcriptomics, the proposed framework has broad applicability in diverse fields where feedback mechanisms and heterogeneity are expected, such as dynamic brain networks in neuroscience, metabolic and signaling pathways in systems biology, and simultaneous equation models in econometrics.

\bibliographystyle{apalike}
\bibliography{biblio}

\newpage
\appendix
\section{Appendix}\label{Appendix}
\subsection{Proof of Theorem~\ref{thm:BM}} To prove Theorem~\ref{thm:BM}, we need the following lemma.
\begin{lemma}\label{lemma:HolderCont}
Suppose $G_0(x) = (g^{0}_{ij}(x))_{p \times q}$ be any matrix that belongs to some class defined as $\mathbb{R}^{p \times q}_{G_0}$. $G_0(x)$ depends on univariate $x\in(a,b]$ where $-\infty < a < b < \infty$. For all $i \in \{1, \cdots, p\}$ and $j \in \{1, \cdots, q\}$, assume $g^{0}_{ij}(x) \in \mathcal{H}^{\nu}$, where $\mathcal{H}^{\nu}$ is the space of uniformly $\nu\mbox{-}$H$\Ddot{o}$lder continuous functions with $\nu\mbox{-}$H$\Ddot{o}$lder coefficient $\|f\|_{\mathcal{H}^\nu}$ defined as, 
\begin{align*}
    \mathcal{H}^\nu = \left\{ f : [a,b] \to \mathbbm{R} : \|f\|_{\mathcal{H}^\nu} = \sup_{x, x^\prime \in [a, b]} \frac{\left|f(x) - f(x^\prime)\right|}{|x-x^\prime|^\nu} < \infty  \right\}.
\end{align*}
Then, for any $\epsilon>0$ there exists $K^\ast \in \mathbb{N}$ depending on $\epsilon$, a set of matrices $\{G_{0k}\}_{k = 1}^{K^\ast}$ of dimension $p \times q$ where $G_{0k} = (g^{0k}_{ij})_{p \times q} \in \mathbb{R}^{p \times q}_{G_0}$ and a set of positive real numbers $a  = a_1 \leq a_2 < \cdots < a_{K^\ast} \leq a_{K^\ast+1} = b$ such that 
$$ \left\| G_0(\cdot) - \sum_{k = 1}^{K^\ast} \mathrm{1}_{(a_k, a_{k+1}]}(\cdot)G_{0k} \right\|_1 < \epsilon, $$
where $G_{0k}$ and $a_k$ depend on $G_0(\cdot)$ and $\epsilon$ and $\left\|A(\cdot) \right\|_1 = \int_{a}^b \left\| A(x) \right\| dx$ for any matrix norm.
\end{lemma}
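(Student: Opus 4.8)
\textbf{Proof proposal for Lemma~\ref{lemma:HolderCont}.}

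The plan is to approximate each entry $g^0_{ij}(\cdot)$ by a single common piecewise constant step function, built on a shared partition of $(a,b]$ that is fine enough to control the oscillation of every entry simultaneously. First I would fix $\epsilon>0$ and let $c = \max_{i,j}\|g^0_{ij}\|_{\mathcal H^\nu}$, which is finite by assumption. For any subinterval $(a_k,a_{k+1}]$ of length $h$, the H\"older bound gives $|g^0_{ij}(x)-g^0_{ij}(x')|\le c\,h^\nu$ for all $x,x'$ in that subinterval, so if on that subinterval I set the constant value $g^{0k}_{ij} := g^0_{ij}(a_{k+1})$ (or any sample point), then $\sup_{x\in(a_k,a_{k+1}]}|g^0_{ij}(x)-g^{0k}_{ij}|\le c\,h^\nu$.

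Next I would choose $K^\ast$ large enough that the uniform mesh size $h = (b-a)/K^\ast$ satisfies $c\,h^\nu < \epsilon/\big((b-a)\cdot p\cdot q\big)$ (or whatever constant the chosen matrix norm forces after comparing it to, say, the entrywise $\ell_1$ norm using equivalence of norms on $\mathbb R^{p\times q}$); this is where the dependence of $K^\ast$ on $\epsilon$ enters. Define $a_k = a + (k-1)h$ for $k=1,\dots,K^\ast+1$ and $G_{0k} = (g^{0k}_{ij})_{p\times q}$. One must check $G_{0k}\in\mathbb R^{p\times q}_{G_0}$: since $G_{0k} = G_0(a_{k+1})$ is simply an evaluation of $G_0(\cdot)$, it lies in the same class, which is exactly why I sample the function value rather than, e.g., averaging. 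Then for each $x\in(a_k,a_{k+1}]$,
\begin{align*}
\Big\|G_0(x)-\sum_{k=1}^{K^\ast}\mathbf 1_{(a_k,a_{k+1}]}(x)G_{0k}\Big\|
= \big\|G_0(x)-G_{0k}\big\|
\le C\sum_{i,j}|g^0_{ij}(x)-g^{0k}_{ij}|
\le C\,p\,q\,c\,h^\nu,
\end{align*}
and integrating over $(a,b]$ multiplies this by $(b-a)$, giving a bound below $\epsilon$ by the choice of $h$; here $C$ is the norm-equivalence constant.

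The main obstacle is a mild one: handling an \emph{arbitrary} matrix norm $\|\cdot\|$ cleanly. Since all norms on the finite-dimensional space $\mathbb R^{p\times q}$ are equivalent, I can bound $\|\cdot\|$ above by a constant multiple of the entrywise sum norm and push everything through, but I should state this equivalence explicitly so the constant $C$ (and hence $K^\ast$) is well defined. A secondary point worth a sentence is measurability/integrability of $x\mapsto\|G_0(x)-\sum_k \mathbf 1_{(a_k,a_{k+1}]}(x)G_{0k}\|$, which is immediate since $G_0(\cdot)$ is continuous (H\"older) and the step function is piecewise constant, so the integrand is bounded and piecewise continuous on $(a,b]$. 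No deep idea is needed beyond the standard observation that uniformly H\"older functions on a compact interval are uniformly approximable by step functions with an explicit modulus-of-continuity rate.
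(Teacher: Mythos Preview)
Your proposal is correct and follows essentially the same approach as the paper: a uniform partition of $(a,b]$ into $K^\ast$ pieces, sampling $G_0$ at a point of each subinterval (the paper uses the left endpoint $a_k$, you use the right), and applying the H\"older bound to control the pointwise deviation before integrating. The only cosmetic difference is that the paper fixes the Frobenius norm and routes through a Cauchy--Schwarz step $\|\cdot\|_1\le\sqrt{b-a}\,\|\cdot\|_2$, whereas you integrate the sup-bound directly and invoke norm equivalence to handle an arbitrary matrix norm; both yield the same explicit dependence of $K^\ast$ on $\epsilon$, $p$, $q$, $b-a$, $\nu$, and $\max_{i,j}\|g^0_{ij}\|_{\mathcal H^\nu}$.
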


\begin{proof}
For any given $\epsilon>0$, we choose $K^\ast = \left[ \left( \frac{pqH^2_c(b-a)^{2\nu+2}}{\epsilon^2} \right)^{1/2\nu}\right] \in \mathbb{N}$ where $H_c = \operatorname{max}_{i,j}\left\| g^0_{ij}\right\|_{\mathcal{H}^\nu}$ with $\left\|g^{0}_{ij}\right\|_{\mathcal{H}^\nu}$ being the $\nu\mbox{-}$H$\Ddot{o}$lder coefficient.
Further, we choose a piecewise constant function, $$g_{ij}(x) = \sum_{k = 1}^{K^\ast}\delta^k_{ij}\mathrm{1}_{\Delta_k} (x), \hspace{5pt} \text{where} \hspace{5pt} a_k = a+\frac{(b-a)(k-1)}{K^\ast} \hspace{5pt} \text{and} \hspace{5pt} \Delta_k = (a_k, a_{k+1}].$$ 
Then for any $k \in \{1, \cdots, K^\ast\}$ and $x \in \Delta_k$, by choosing $\delta^k_{ij} = g^{0}_{ij}(a_k)$ we have,
\begin{align*}
    \left| g^{0}_{ij}(x) - g_{ij}(x) \right| = \left| g^{0}_{ij}(x) - \delta^k_{ij} \right| = \left| g^{0}_{ij}(x) - g^{0}_{ij}(a_k) \right| \leq \left\|g^{0}_{ij}\right\|_{\mathcal{H}^\nu}\frac{(b-a)^\nu}{{K^\ast}^\nu},
\end{align*}
using $\nu\mbox{-}$H$\Ddot{o}$lder inequality. 

Therefore the following holds,
\begin{align*}
    \begin{split}
        \left\| G_0(\cdot) - \sum_{k = 1}^{K^\ast} \mathrm{1}_{(a_k, a_{k+1}]} (\cdot)G_{0k} \right\|_1 & = \int_{a}^b \left\| G_0(x) - \sum_{k = 1}^{K^\ast} \mathrm{1}_{(a_k, a_{k+1}]} (x)G_{0k} \right\| \, dx\\
        & \leq \sqrt{\int_{a}^b 1 \, dx} \sqrt{\int_{a}^b \left\| G_0(x) - \sum_{k = 1}^{K^\ast} \mathrm{1}_{(a_k, a_{k+1}]} (x)G_{0k} \right\|^2 \, dx}\\
        & = \sqrt{b-a} \sqrt{\int_{a}^b \left\| G_0(x) - \sum_{k = 1}^{K^\ast} \mathrm{1}_{(a_k, a_{k+1}]} (x)G_{0k} \right\|^2 \, dx}\\
        & = \sqrt{b-a} \sqrt{\sum_{i = 1}^p \sum_{j = 1}^q \int_{a}^b \left(g^0_{ij}(x) - \sum_{k = 1}^{K^\ast} \mathrm{1}_{\Delta_k}(x)g^{0k}_{ij} \right)^2}\, dx\\
        & = \sqrt{b-a} \sqrt{\sum_{i = 1}^p \sum_{j = 1}^q \int_{a}^b \left(g^0_{ij}(x) - g_{ij}(x) \right)^2}\, dx\\
        & \leq \sqrt{\frac{pq H^2_c(b -a)^{2\nu+2}}{{K^\ast}^{2\nu}}} = \epsilon.
    \end{split}
\end{align*}
Here the first inequality follows using Cauchy Schwarz for definite integrals, the third equality follows by choosing the matrix norm to be the Frobenius norm, the fourth equality follows by setting $g^{0}_{ij} = \delta^k_{ij}$ since $g_{ij}(x) = \sum_{k = 1}^{K^\ast} \delta^k_{ij} \mathrm{1}_{\Delta_k}(x)$, and the last inequality follows from the choice of $K^\ast$, which depends on $\epsilon$.
\end{proof}

With Lemma~\ref{lemma:HolderCont}, we proceed to prove Theorem~\ref{thm:BM}, which we restate here.

\textbf{Theorem~\ref{thm:BM}.} {\it Suppose $(B_0(\cdot))_{p \times p}$ be any matrix of SEM coefficients and $(M_0(\cdot))_{p \times 1}$ be the intercept vector 
as in Equation~\eqref{eq:sem0},
 where $B_0(\cdot)$, $M_0(\cdot)$ are continuous functions for univariate $x \in (a,b]$ with $-\infty<a<b<\infty$. We assume that all the entries of $B_0(\cdot)$ and $M_0(\cdot)$ belong to the space of uniformly $\nu\mbox{-}$H$\Ddot{o}$lder continuous functions. Then for any $\epsilon>0$, there exists $K^* \in \mathbb{N}$ such that 
 \begin{itemize}
 \item [(i)] For any absolutely continuous prior distributions on $(B_j, \mu_j, \lambda_j) \in (\mathbb{S}_p, \mathbb{R}, \mathbb{R}^{+})$ and on \\
 \noindent $(\pi_1, \cdots, \pi_{K^\ast})$ we have,
\begin{align*}
\begin{split}
     &\mathbb{P}\left(\left\| B_0 -\sum_{j = 1}^{K^*} \pi_j(x)B_j \right\|_1 < \epsilon \right) > 0\\
 \end{split}
\end{align*}
where $\mathbb{S}_p$ be the set of $p \times p$ stable matrices and $\pi_j(x) = \frac{\pi_j N \left(x; \mu_j, \lambda_j \right)}{\sum_{j = 1}^{K^\ast} \pi_j N \left(x; \mu_j, \lambda_j \right)}$ for $j \in \{1, \cdots, K^*\}$. 
 
 \item [(ii)] Similarly, for any absolutely continuous prior distributions on $(M_j, \mu_j, \lambda_j) \in (\mathbb{R}^{p}, \mathbb{R}, \mathbb{R^+})$ and on $(\pi_1, \cdots, \pi_{K^\ast})$ we have,
\begin{align*}
 \begin{split}
     &\mathbb{P}\left(\left\| M_0 -\sum_{j = 1}^{K^*} \pi_j(x)M_j \right\|_1 < \epsilon \right) > 0\\
 \end{split}
\end{align*}
\end{itemize}
where $\pi_j(x)$ is of the same form as part $(i)$ of the theorem.}

\begin{proof} 

\begin{itemize}
    \item [$(i)$]  
 
 Assume that the interval $(a,b]$ are first subdivided into $K$ many intervals of length $L = \frac{(b-a)}{L}$. Each of the intervals are further subdivided into $K^\prime$ many subintervals of length $\frac{L}{K^\prime}$. Hence, we divide the interval $(a,b]$ into $K^\ast = KK^\prime$ subintervals.
 
 By using the triangle inequality, we have:
\begin{align*}
\begin{split}
      \left\| B_0(\cdot) - \sum_{k = 1}^{K} \sum_{l = 1}^{K^\prime} \pi_{kl} (\cdot) B_{kl} \right\|_1 \leq T_1 + T_2 + T_3,\\
\end{split}
\end{align*}
where, 
\begin{align*}
    \begin{split}
        & T_1 = \left\| B_0(\cdot) - \sum_{k = 1}^{K} \sum_{l = 1}^{K^\prime} \mathrm{1}_{(a_{kl, ak(l+1)}]} (\cdot) B_{0kl} \right\|_1,\\
        & T_2 = \left\| \sum_{k = 1}^{K} \sum_{l = 1}^{K^\prime} \mathrm{1}_{(a_{kl}, a_{k(l+1)]}} (\cdot) B_{0kl} - \sum_{k = 1}^{K} \sum_{l = 1}^{K^\prime} \pi_{kl} (\cdot) B_{0kl} \right\|_1, \\
        & T_3 = \left\| \sum_{k = 1}^{K} \sum_{l = 1}^{K^\prime} \pi_{kl} (\cdot) B_{0kl} - \sum_{k = 1}^{K} \sum_{l = 1}^{K^\prime} \pi_{kl} (\cdot) B_{kl} \right\|_1.
    \end{split}
\end{align*}

\underline{Bound $T_1$}. Using Lemma~\ref{lemma:HolderCont}, consider $G_0(\cdot) = B_0(\cdot)$, $G_{0kl} = B_{0kl}$ for $k = 1, \cdots, K$ and $l = 1, \cdots, K^\prime$ and $\mathbbm{R}^{p\times q}_{G_0} = \mathbb{S}_p$, set of stable matrices of size $p$. Therefore, given any $\epsilon>0$, there exists $K^\ast = KK^\prime \in \mathbbm{N}$ depending on $\epsilon$, a set of stable matrices $\{{B_{0kl}\}_{1\leq k\leq K,\  1\leq l\leq K^\prime}}$ of dimension $p \times p$ and a set of positive real numbers in ascending order $\{{a_{kl}\}_{1\leq k\leq K,\  1\leq l\leq K^\prime}}$ such that,
 \begin{align}\label{eq: B0_B0k}
    T_1 = \left\| B_0(\cdot) - \sum_{k = 1}^K \sum_{l = 1}^{K^\prime} \mathrm{1}_{(a_{kl}, a_{k(l+1)}]}(\cdot)B_{0kl} \right\|_1 < \frac{\epsilon}{3}.
 \end{align}

\underline{Bound $T_2$.} Consider the term $\pi_{kl}(x)$. For all $k = 1, \cdots, K$ and $l = 1, \cdots, K^\prime$, with the choice of $\lambda_{kl}= \lambda$ and $\pi_{kl} = \pi$, we have the following,
\begin{align*}
\begin{split}
      \pi_{kl}(x) & = \frac{\pi_{kl}N(x; \mu_{kl}, \lambda_{kl})}{\sum_{u = 1}^{K} \sum_{v = 1}^{K^\prime} \pi_{uv}N(x; \mu_{uv}, \lambda_{uv})} = \frac{\exp\left\{ -\frac{1}{2} \frac{(x - \mu_{kl})^2}{\lambda} \right\}}{\sum_{u = 1}^{K} \sum_{v = 1}^{K^\prime}  \exp\left\{ -\frac{1}{2} \frac{(x - \mu_{uv})^2}{\lambda} \right\}}.\\
\end{split}
\end{align*}

In order to establish that $\pi_{kl}(\cdot)$ approximates $\mathrm{1}_{(a_{kl}, a_{k(l+1)}]}(\cdot)$ well in $\|\cdot\|_1$, we choose to focus on appropriate neighborhoods of $\mu_{kl}$ and $\lambda$ that have positive probabilities. In particular, if we choose,
$$ \mu_{kl} \in \left[ \frac{a_{kl} + a_{k(l+1)}}{2} - \frac{L}{6{K^\prime}^2}, \frac{a_{kl} + a_{k(l+1)}}{2} + \frac{L}{6{K^\prime}^2}  \right] \hspace{5pt} \text{and} \hspace{5pt} \lambda \in \left[ \frac{L^2}{2{K^\prime}^4}, \frac{L^2}{{K^\prime}^4} \right], $$ the prior probabilities of these neighborhoods are non-zero because of absolutely continuous priors.
Further, depending on $L$, any given $\epsilon$ and corresponding $K$ and \\
\noindent $\zeta = \max_{1 \leq k \leq K, 1 \leq l \leq K^\prime} \left\{\|B_{0kl}\|\right\}$, we consider $K^\prime = \psi(\epsilon, \zeta, L, K)$. We mention the specific choice of $K^\prime$ in the subsequent part of the proof.
To approximate an indicator function by $\pi_{kl}(x)$ we consider the term, $\|\pi_{kl}(x) - \mathrm{1}_{(a_{kl}, a_{k(l+1)}]}(x)\|_1$ which can be written as follows,
\begin{align*}
    \begin{split}
        & \|\pi_{kl}(x) - \mathrm{1}_{(a_{kl}, a_{k(l+1)}]}(x)\|_1 \\
        & = \int_{a}^b \left| \pi_{kl}(x) - \mathrm{1}_{(a_{kl}, a_{k(l+1)}]}(x) \right|\, dx\\
        & = \int_{x \in (a_{kl}, a_{k(l+1)}]} \left| \pi_{kl}(x) - \mathrm{1}_{(a_{kl}, a_{k(l+1)}]}(x) \right|\, dx + \int_{x \notin (a_{kl}, a_{k(l+1)}]} \left| \pi_{kl}(x) - \mathrm{1}_{(a_{kl}, a_{k(l+1)}]}(x) \right|\, dx.
    \end{split}
\end{align*}

Furthermore, $\int_{x \in (a_{kl}, a_{k(l+1)}]} \left| \pi_{kl}(x) - \mathrm{1}_{(a_{kl}, a_{k(l+1)}]}(x) \right|\, dx = I^*_1 + I^*_2 + I^*_3$ where 
$$I^*_1 = \int_{x \in (a_{kl}, a_{kl} + \frac{L}{5{K^\prime}^2}]} \left| \pi_{kl}(x) - 1 \right|\, dx, $$
$$I^*_2 = \int_{x \in (a_{kl} + \frac{L}{5{K^\prime}^2}, a_{k(l+1)} - \frac{L}{5{K^\prime}^2}]} \left| \pi_{kl}(x) - 1 \right|\, dx, $$
$$I^*_3 = \int_{x \in (a_{k(l+1)} - \frac{L}{5{K^\prime}^2}, a_{k(l+1)}]} \left| \pi_{kl}(x) - 1 \right|\, dx.$$

Note that, $I^*_1 \leq  \int_{x \in (a_{kl}, a_{kl} + \frac{L}{5{K^\prime}^2}]} 1\, dx \leq \frac{L}{5{K^\prime}^2}.$ Similarly we also have, $I^*_3 \leq \frac{L}{5{K^\prime}^2}$. Next we bound $I^*_2$ as follows,
\begin{align*}
\begin{split}
      I^*_2 & = \int_{x \in (a_{kl} + \frac{L}{5{K^\prime}^2}, a_{k(l+1)} - \frac{L}{5{K^\prime}^2}]} \left| \pi_{kl}(x) - 1 \right|\, dx\\
      & = \int_{x \in (a_{kl} + \frac{L}{5{K^\prime}^2}, a_{k(l+1)} - \frac{L}{5{K^\prime}^2}]} \left| \frac{\exp\left\{ -\frac{1}{2} \frac{(x - \mu_{kl})^2}{\lambda} \right\}}{\sum_{u = 1}^{K} \sum_{v = 1}^{K^\prime}  \exp\left\{ -\frac{1}{2} \frac{(x - \mu_{uv})^2}{\lambda} \right\}} - 1 \right|\, dx\\
      & = \int_{x \in (a_{kl} + \frac{L}{5{K^\prime}^2}, a_{k(l+1)} - \frac{L}{5{K^\prime}^2}]} \left| \frac{1}{1+\sum \sum_{(u,v) \neq (k,l)} \exp\left\{ -\frac{1}{2} \frac{(x - \mu_{uv})^2}{\lambda} + \frac{1}{2} \frac{(x - \mu_{kl})^2}{\lambda}  \right\}} - 1 \right|\, dx\\
      & \leq \int_{x \in (a_{kl} + \frac{L}{5{K^\prime}^2}, a_{k(l+1)} - \frac{L}{5{K^\prime}^2}]} {\sum \sum}_{(u,v) \neq (k,l)} \exp\left\{ -\frac{1}{2} \frac{(x - \mu_{uv})^2}{\lambda} + \frac{1}{2} \frac{(x - \mu_{kl})^2}{\lambda}  \right\}\, dx\\
      & \leq \frac{L}{K^\prime} \cdot KK^\prime \cdot \exp\left\{- \frac{\left(\frac{L}{2K^\prime} - \frac{L}{6{K^\prime}^2} + \frac{L}{5{K^\prime}^2} \right)^2}{\frac{2L^2}{{K^\prime}^4}} + \frac{\left(\frac{L}{2K^\prime} + \frac{L}{6{K^\prime}^2} - \frac{L}{5{K^\prime}^2} \right)^2}{\frac{2L^2}{{K^\prime}^4}} \right\}\\
      & \leq LK \exp \left\{ - \frac{K^\prime}{30} \right\}. 
\end{split}
\end{align*}

Finally, we proceed to bound $\int_{x \notin (a_{kl}, a_{k(l+1)}]} \left| \pi_{kl}(x) - \mathrm{1}_{(a_{kl}, a_{k(l+1)}]}(x) \right|\, dx$. Define the integral to be $I^\ast_4$ and it can be written as,
\begin{align*}
    \begin{split}
       I^\ast_4 = & \int_{x \notin (a_{kl}, a_{k(l+1)}]} \left| \pi_{kl}(x) - \mathrm{1}_{(a_{kl}, a_{k(l+1)}]}(x) \right|\, dx \\
        & = \int_{x \notin (a_{kl}, a_{k(l+1)}]} \left| \frac{\exp\left\{ -\frac{1}{2} \frac{(x - \mu_{kl})^2}{\lambda} \right\}}{\sum_{u = 1}^{K} \sum_{v = 1}^{K^\prime}  \exp\left\{ -\frac{1}{2} \frac{(x - \mu_{uv})^2}{\lambda} \right\}}\right|\, dx \\
        & = \int_{x \notin (a_{kl}, a_{k(l+1)}]} \left| \frac{1}{\sum_{u = 1}^K \sum_{v = 1}^{K^\prime} \exp\left\{ -\frac{1}{2} \frac{(x - \mu_{uv})^2}{\lambda} + \frac{1}{2} \frac{(x - \mu_{kl})^2}{\lambda}  \right\}} \right|\, dx
    \end{split}
\end{align*}

For notational simplicity, we define $C(x) = \sum_{u = 1}^K \sum_{v = 1}^{K^\prime} \exp\left\{ -\frac{1}{2} \frac{(x - \mu_{uv})^2}{\lambda} + \frac{1}{2} \frac{(x - \mu_{kl})^2}{\lambda}  \right\}$. Therefore, we have 
\begin{align}\label{eq: Obj_x}
    I^\ast_4 = \int_{x \notin (a_{kl}, a_{k(l+1)}]} \left| \pi_{kl}(x) - \mathrm{1}_{(a_{kl}, a_{k(l+1)}]}(x) \right|\, dx =   \int_{x \notin (a_{kl}, a_{k(l+1)}]} \frac{1}{C(x)} \, dx.
\end{align}

Now define the following,
\begin{align*}
    L_{kl} = \left\{ (u_L, v_L) \in \{1, \cdots, k-1\} \times \{1, \cdots, K^\prime\} \cup \{k\} \times \{1, \cdots, l-2\} \right\},\\
    R_{kl} = \left\{ (u_R, v_R) \in \{k\} \times \{l+2, \cdots, K^\prime\}\cup \{k+1, \cdots, K\} \times \{1, \cdots, K^\prime\} \right\}.
\end{align*}
This means $L_{kl}$ is the collection of the subscripts of the left ends of all subintervals (of length $\frac{L}{K^\prime}$) to the left of $a_{k(l-1)}$. Similarly, $R_{kl}$ is to the right of $a_{k(l+2)}$. We further subdivide the interval $(a,b] \setminus (a_{kl}, a_{k(l+1)}]$ into four parts and write,
\begin{align*}
    &(a,b] \setminus (a_{kl}, a_{k(l+1)}]\\
    &= \left\{ \cup_{(u_L, v_L) \in L_{kl}} (a_{u_L v_L}, a_{u_L (v_L + 1)}]  \right\} \cup (a_{k(l-1)}, a_{kl}] \cup (a_{k(l+1)}, a_{k(l+2)}] \cup \\ & \hspace{10cm} \left\{\cup_{(u_R, v_R) \in R_{kl}} (a_{u_R v_R, a_{u_R (v_R+1)}}]  \right\}, 
\end{align*}
where $u_L, v_L$ are the indices based on which intervals in $L_{kl}$ are constructed. In a similar way, $u_R, v_R$ are also interpreted for $R_{kl}$.

In particular, from Equation~\eqref{eq: Obj_x}, we can write the integral as,
\begin{align*}
    \begin{split}
        & \int_{x \notin (a_{kl}, a_{k(l+1)}]} \left| \pi_{kl}(x) - \mathrm{1}_{(a_{kl}, a_{k(l+1)}]}(x) \right|\, dx = I^\ast_{41} + I^\ast_{42} + I^\ast_{43} + I^\ast_{44},\\
    \end{split}
\end{align*}
where, $$ I^\ast_{41}  = \sum_{(u_L, v_L] \in L_{kl}} \int_{x \in (a_{u_L v_L}, a_{u_L (v_L +1)}]} \frac{1}{C(x)} \, dx,$$ $$ I^\ast_{42} =  \int_{x \in (a_{k(l-1)}, a_{kl}]} \frac{1}{C(x)} \, dx, $$ $$ I^\ast_{43} = \int_{x \in (a_{k(l+1)}, a_{k(l+2)}]} \frac{1}{C(x)} \, dx, $$
$$ I^\ast_{44} = \sum_{(u_R, v_R] \in R_{kl}} \int_{x \in (a_{u_R v_R}, a_{u_R (v_R +1)}]} \frac{1}{C(x)} \, dx.  $$

Next we consider each of the integrals separately in the following. We start with $I^\ast_{41}$.

Consider $x \in (a_{u_L v_L}, a_{u_L (v_L + 1)}]$ for some $(u_L, v_L) \in L_{kl}$. Then we can write the following lower bound on $C(x)$ given by, 
\begin{align*}
    \begin{split}
        C(x) & = \sum_{u = 1}^K \sum_{v = 1}^{K^\prime} \exp \left[-\frac{1}{2\lambda} \left\{ (x - \mu_{uv})^2 - (x - \mu_{kl})^2 \right\}\right] \\
        & \geq \exp \left[-\frac{1}{2\lambda} \left\{ (x - \mu_{u_L v_L})^2 - (x - \mu_{kl})^2 \right\}\right]\\
        & \geq \exp \left[-\frac{1}{2\lambda} \left\{ (x - \mu_{u_L v_L})^2 - (a_{k(l-1)} - \mu_{kl})^2 \right\}\right]\\
        & \geq \exp \left[\frac{{K^\prime}^4}{2L^2} \left\{ \left(\frac{3L}{2K^\prime} - \frac{L}{6{K^\prime}^2} \right)^2 - \left(\frac{L}{2K^\prime}+\frac{L}{6{K^\prime}^2} \right)^2 \right\}\right]\\
        & = \exp \left[ {K^\prime}^2 - \frac{1}{3}K^\prime \right] \geq \exp \left[  \frac{1}{3}K^\prime \right].
    \end{split}
\end{align*}
\end{itemize}
Therefore, using this bound we have the following,
\begin{align}\label{eq: I41}
    \begin{split}
        I^\ast_{41} & = \sum_{(u_L, v_L] \in L_{kl}} \int_{x \in (a_{u_L v_L}, a_{u_L (v_L +1)}]} \frac{1}{C(x)} \, dx\\
        & \leq \sum_{(u_L, v_L] \in L_{kl}} \int_{x \in (a_{u_L v_L}, a_{u_L (v_L +1)}]} \exp \left[ -\frac{1}{3}K^\prime \right] \, dx.\\
    \end{split}
\end{align}

The bound for $I^\ast_{44}$ is the same as that for $I^\ast_{41}$ because of the symmetry. Consider $x \in (a_{u_R v_R}, a_{u_R (v_R + 1)}]$ for some indices of the form $(u_R, v_R) \in R_{kl}$. 
Then $C(x)$ is again lower bounded as $C(x) \geq \exp\left[\frac{1}{3}K^\prime \right]$. Therefore we have, 
\begin{align}\label{eq: I44}
    \begin{split}
        I^\ast_{44} & = \sum_{(u_R, v_R] \in L_{kl}} \int_{x \in (a_{u_R v_R}, a_{u_R (v_R +1)}]} \frac{1}{C(x)} \, dx\\
        & \leq \sum_{(u_R, v_R] \in L_{kl}} \int_{x \in (a_{u_R v_R}, a_{u_R (v_R +1)}]} \exp \left[ -\frac{1}{3}K^\prime \right] \, dx.\\ 
    \end{split}
\end{align}

Now we consider $I^\ast_{42}$ and further subdivide $(a_{k(l-1)},a_{kl}] = (a_{k(l-1)},a_{k(l-1)^\ast}]  \cup (a_{k(l-1)^\ast},a_{kl^\ast}]  \cup (a_{kl^\ast},a_{kl}]$ where $a_{k(l-1)^\ast} = a_{k(l-1)} + \frac{L}{5{K^\prime}^2}$ and $a_{kl^\ast} = a_{kl} - \frac{L}{5{K^\prime}^2}$.

When $x \in (a_{k(l-1)^\ast},a_{kl^\ast}]$, similarly as before we maximize the term $C(x)$ as,
\begin{align*}
    \begin{split}
         C(x) & = \sum_{u = 1}^K \sum_{v = 1}^{K^\prime} \exp \left[-\frac{1}{2\lambda} \left\{ (x - \mu_{uv})^2 - (x - \mu_{kl})^2 \right\}\right] \\
        & \geq \exp \left[-\frac{1}{2\lambda} \left\{ (x - \mu_{k (l-1)})^2 - (x - \mu_{kl})^2 \right\}\right]\\
        & \geq \exp \left[-\frac{1}{2\lambda} \left\{ (a_{k(l-1)^\ast} - \mu_{k(l-1)})^2 - (a_{kl^\ast} - \mu_{kl})^2 \right\}\right]\\
        & \geq \exp \left[\frac{{K^\prime}^4}{2L^2} \left\{ \left( \frac{L}{2K^\prime} + \frac{L}{5{K^\prime}^2} - \frac{L}{6{K^\prime}^2} \right)^2 - \left( \frac{L}{2K^\prime} - \frac{L}{5{K^\prime}^2} + \frac{L}{6{K^\prime}^2}\right)^2   \right\} \right]\\
        & \geq \exp\left[\frac{1}{30}K^\prime \right].
    \end{split}
\end{align*}

Therefore, using $\frac{1}{C(x)} \leq \exp\left[-\frac{1}{30}K^\prime \right]$ when $x \in (a_{k(l-1)^\ast},a_{kl^\ast}]$ and $\frac{1}{C(x)} \leq 1$ when $x \in (a_{k(l-1)},a_{k(l-1)^\ast}]  \cup (a_{kl^\ast},a_{kl}]$ we can write the integral $I^\ast_{42}$ as,
\begin{align}\label{eq: I42}
    \begin{split}
        I^\ast_{42} & \leq \int_{x \in (a_{k(l-1)},a_{k(l-1)^\ast}]} \frac{1}{C(x)}\, dx \, + \int_{x \in (a_{k(l-1)^\ast},a_{kl^\ast}]} \frac{1}{C(x)}\, dx \, + \int_{x \in (a_{kl^\ast},a_{kl}]} \frac{1}{C(x)}\, dx \\
        & \leq \int_{x \in (a_{k(l-1)},a_{k(l-1)^\ast}]} \, dx \, + \int_{x \in (a_{k(l-1)^\ast},a_{kl^\ast}]} \exp\left[-\frac{1}{30}K^\prime \right]  dx \, + \int_{x \in (a_{kl^\ast},a_{kl}]}  dx\\
        & \leq \frac{L}{K^\prime}\exp\left[-\frac{1}{30}K^\prime \right] + \frac{2L}{5{K^\prime}^2}.
    \end{split}
\end{align}

Now $(a_{k(l+1)}, a_{k(l+2)}]$ being the opposite interval of $(a_{k(l-1)}, a_{kl}]$, in an exact similar way we can bound $I^\ast_{43}$. Therefore, we can write the following,
\begin{align}\label{eq: I43}
    I^\ast_{43} \leq \int_{x \in (a_{k(l+1)}, a_{k(l+2)}]} \frac{1}{C(x)} \, dx \leq \frac{L}{K^\prime}\exp\left[-\frac{1}{30}K^\prime \right] + \frac{2L}{5{K^\prime}^2}.
\end{align}

Combining Equations~\eqref{eq: I41}-\eqref{eq: I43}, we obtain the following,
\begin{align}
\begin{split}
    I^\ast_4
    & = I^\ast_{41} + I^\ast_{42} + I^\ast_{43} + I^\ast_{44} \\
    & \leq  \sum_{(u_L, v_L] \in L_{kl}} \int_{x \in (a_{u_L v_L}, a_{u_L (v_L +1)}]} \exp \left[ -\frac{1}{3}K^\prime \right] \, dx\, + \frac{2L}{K^\prime}\exp\left[-\frac{1}{30}K^\prime \right] \\ & \hspace{1cm} + \frac{4L}{5{K^\prime}^2}\, + \sum_{(u_R, v_R] \in L_{kl}} \int_{x \in (a_{u_R v_R}, a_{u_R (v_R +1)}]} \exp \left[ -\frac{1}{3}K^\prime \right] \, dx\\
    & \leq \exp \left[ -\frac{1}{3}K^\prime \right]\int_{a}^b dx + \frac{2L}{K^\prime}\exp\left[-\frac{1}{30}K^\prime \right] + \frac{4L}{5{K^\prime}^2}\\
    & \leq LK \exp \left[ -\frac{1}{3}K^\prime \right]  + \frac{2L}{K^\prime}\exp\left[-\frac{1}{30}K^\prime \right] + \frac{4L}{5{K^\prime}^2}\\
    & \leq 3LK \exp \left[ -\frac{1}{30}K^\prime \right] + \frac{4L}{5{K^\prime}^2}.\\
\end{split}
\end{align}

In order to have the upper bound for $LK \exp \left[ -\frac{1}{30}K^\prime \right] + \frac{4L}{5{K^\prime}^2}$, we choose $K^\prime$ in the following way. We know $\lim_{K^\prime \to \infty} K^\prime \exp\left[ -\frac{K^\prime}{30}\right] = 0$. That is, there exists $K^\prime_0(L,K,\zeta,\epsilon)$ such that for all $K^\prime \geq K^\prime_0(L,K,\zeta,\epsilon)$, $3LK\exp\left[ -\frac{K^\prime}{30}\right] \leq \frac{\epsilon}{24\zeta K K^\prime}$. Moreover we would also like to bound $\frac{4L}{5{K^\prime}^2}$ with $\frac{\epsilon}{24\zeta K K^\prime}$. Hence, the specific choice of $K^\prime$ is given by $K^\prime = \left[\max \left\{\frac{96LK\zeta}{5\epsilon}, K^\prime_0(L,K,\zeta,\epsilon) \right\}\right]+1$. Under this choice, we have the following,
$$ I^\ast_{1} \leq \frac{L}{5{K^\prime}^2} \leq \frac{\epsilon}{12\zeta K K^\prime}, $$
$$ I^\ast_{2} \leq LK \exp \left[-\frac{K^\prime}{30} \right] \leq \frac{\epsilon}{12\zeta K K^\prime}, $$
$$I^\ast_{3} \leq \frac{L}{5{K^\prime}^2} \leq \frac{\epsilon}{12\zeta K K^\prime},$$
$$ I^\ast_{4} \leq 3LK\exp\left[-\frac{K^\prime}{30} \right] +  \frac{4L}{5{K^\prime}^2} \leq \frac{\epsilon}{12\zeta K K^\prime}. $$
Therefore, combining $I^\ast_{1}, I^\ast_{2}, I^\ast_{3}$ and $I^\ast_{4}$ we get,
\begin{align*}
    \left\|\pi_{kl}(x) - \mathrm{1}_{(a_{kl}, a_{k(l+1)}]}(x)\right\|_1 = I^\ast_{1} + I^\ast_{2} + I^\ast_{3} + I^\ast_{4} \leq \frac{\epsilon}{3\|B_{0k}\|KK^\prime}.
\end{align*}
This implies that given $\pi$, with positive probability the following holds, 
\begin{align}\label{eq: pi_ind}
  T_2 = \left\|  \sum_{k = 1}^K \sum_{l = 1}^{K^\prime} \pi_{kl}(x) B_{0kl} - \sum_{k = 1}^K \sum_{l = 1}^{K^\prime} \mathrm{1}_{(a_{kl}, a_{k(l+1)}]}(x)B_{0kl} \right\|_1 \leq \frac{\epsilon}{3}.
\end{align}

\underline{Bound $T_3$}. Note that the prior of $B_{kl}$ is absolutely continuous on $\mathbb{S}_{p}$. Therefore, we have,
$$\mathbbm{P}\left( \left\| B_{0kl} -  B_{kl} \right\|_1 \leq \frac{\epsilon}{3}\right)>0.$$ Also we have that,
 \begin{align}
     T_3 = \left\| \sum_{k = 1}^{K} \sum_{l = 1}^{K^\prime} \pi_{kl} (\cdot) B_{0kl} - \sum_{k = 1}^{K} \sum_{l = 1}^{K^\prime} \pi_{kl} (\cdot) B_{kl} \right\|_1 \leq 
     \sum_{k = 1}^{K} \sum_{l = 1}^{K^\prime} \left\| B_{0kl} -  B_{kl} \right\|_1  \leq \frac{\epsilon}{3}.
 \end{align}
This implies, for given $\pi$ we have, 
\begin{align}\label{eq: B0k_Bk}
    \mathbbm{P}\left(T_3 \leq \frac{\epsilon}{3} \bigm \vert \pi \right) > 0.
\end{align}
Combining Equations~\eqref{eq: B0_B0k}, \eqref{eq: pi_ind} and \eqref{eq: B0k_Bk} we obtain that given $\pi$, with positive probability the following holds,
\begin{align}
        \left\| B_0(\cdot) - \sum_{k = 1}^{K} \sum_{l = 1}^{K^\prime} \pi_{kl} (\cdot) B_{kl} \right\|_1 \leq T_1 + T_2 + T_3 \leq \epsilon.
\end{align}
Finally by marginalizing out $\pi$, we obtain that,
Hence we have,
\begin{align*}
    \mathbbm{P} \left(\left\|B_0(\cdot) - \sum_{j = 1}^{K^*} \pi_{j} (\cdot) B_{j} \right\|_1 \leq \epsilon \right) >  0.
\end{align*}
which concludes the proof.

\item [$(ii)$]  The proof of this  part follows in the exact similar way as part $(i)$ of the theorem and Lemma~\ref{lemma:HolderCont} where we choose $q = 1$, $B_0 = M_0$, $B_j = M_j$ and absolutely continuous prior on $M_j$ in $\mathbb{R}^{p}$.

\end{proof}

\end{document}